\g@addto@macro{\UrlBreaks}{\UrlOrds}
\newtheorem{definition}{Definition}
\newtheorem{remark}{Remark}
\newtheorem{theorem}{Theorem}
\def\R{\mathbb{R}}
\newcommand{\norm}[1]{\lVert{#1}\rVert}
\newcommand{\PP}[1]{\mathbb{P}\left\{{#1}\right\}} 
\newcommand{\EE}[1]{\mathbb{E}\left[{#1}\right]} 
\def\R{\mathbb{R}}
\def\Z{\mathbf{Z}}
\def\W{\mathbf{W}}
\newcommand\eqd{\stackrel{\mathclap{\normalfont\mbox{d}}}{=}}
\newcommand{\X}{\mathbf{X}}
\newcommand{\Y}{\mathbf{Y}}
\newcommand{\V}{\mathbf{V}}
\newcommand{\M}{\mathbf{M}}
\newcommand{\bmb}{\mathbf{b}}
\newcommand{\bms}{\mathbf{s}}
\newcommand{\bmbeta}{\bm{\beta}}
\newcommand{\bmmu}{\bm{\mu}}
\newcommand{\bmSigma}{\bm{\Sigma}}
\DeclareMathOperator{\diag}{diag}
\newcommand{\Xp}{\widetilde{X}}
\newcommand{\indep}{\perp \!\!\! \perp}
\def\bSig\mathbf{\Sigma}
\renewcommand{\hat}{\widehat}
\renewcommand{\tilde}{\widetilde}
\title{Model-free High Dimensional Mediator Selection with False Discovery Rate Control}
\author[1]{Runqiu Wang}
\author[1]{Ran Dai\thanks{ran.dai@unmc.edu}}
\author[2]{Jieqiong Wang}
\author[1]{Kah Meng Soh}
\author[2]{Ziyang Xu} 
\author[2]{Mohamed Azzam} 
\author[1]{Hongying Dai} 
\author[1]{Cheng Zheng\thanks{cheng.zheng@unmc.edu}}
\affil[1]{Department of Biostatistics, University of Nebraska Medical Center, Omaha, Nebraska, U.S.A.}
\affil[2]{Department of Neurological Science, University of Nebraska Medical Center, Omaha, Nebraska, U.S.A.}
\begin{document}

\maketitle



\begin{abstract}
   There is a challenge in selecting high-dimensional mediators when the mediators have complex correlation structures and interactions. In this work, we frame the high-dimensional mediator selection problem into a series of hypothesis tests with composite nulls, and develop a method to control the false discovery rate (FDR) which has mild assumptions on the mediation model. We show the theoretical guarantee that the proposed method and algorithm achieve FDR control. We present extensive simulation results to demonstrate the power and finite sample performance compared with existing methods. Lastly, we demonstrate the method for analyzing the Alzheimer's Disease Neuroimaging Initiative (ADNI) data, in which the proposed method selects the volume of the hippocampus and amygdala, as well as some other important MRI-derived measures as mediators for the relationship between gender and dementia progression.

%
%

\textbf{Keywords}:FDR; High-dimensional mediators; imaging data; knockoff 
\end{abstract}
\section{Introduction}
In social and biomedical research, mediation analysis is frequently employed to investigate how exposure influences an outcome through intermediary variables. The rise of studies that collect high-dimensional data, such as neuroimaging \citep{poldrack2024handbook,dan2024image,penny2011statistical} or high-throughput omics data \citep{subramanian2005gene, love2014moderated}, has sparked interest among scientists in uncovering mediators within these datasets to address complex scientific questions involving various exposures and outcomes. 

{\color{black}As a motivating example, we investigate the impact of gender on dementia progression using data from the Alzheimer’s Disease Neuroimaging Initiative (ADNI). Evidence suggests that men and women exhibit differences in the development and progression of Alzheimer’s disease (AD). However, the factors underlying these gender differences are still poorly understood \citep{pike2017sex}. Understanding these neurobiological mechanisms is critically important for developing targeted interventions and precision medicine approaches, as female and male populaitons may require distinct therapeutic strategies \citep{cahill2006sex,mccarthy2011reframing,cui2023enhancing}. The current knowledge gap limits our ability to accurately predict disease trajectories and optimize treatment outcomes across diverse patient populations. Moreover, biological systems rarely operate through simple linear mechanisms. Instead, complex nonlinear relationships and interactions between mediators are commonly observed in neurological processes. This complexity is particularly relevant in the relationship between gender and AD, where multiple neural pathways can work together or against each other to affect disease progression through higher-order interactions \citep{jack2013tracking,insel2016assessing}. To address these challenges, we proposed to identify mediators from high-dimensional structural MRI-derived imaging measures to better illustrate the mechanisms driving AD progression differences between the gender.} 

{\color{black}From a statistical perspective, mediation analysis has traditionally been developed and extensively applied to low-dimensional settings over the past decade \citep{baron_moderatormediator_1986, vanderweele_explanation_2015}. In such scenarios, mediation analysis aims to assess whether the relationship between an exposure and an outcome is mediated by a low-dimensional set of mediators. More recently, advancements in data collection techniques across fields such as imaging, genetics, and omics research motivate the development of methods tailored for high-dimensional mediation analysis \citep{sampson2018fwer, zhang_estimating_2016,perera2022hima2,bai2024efficient,dai2024_mediation}. These recent methods mostly focus on the linear model cases without considering interactions among the mediators. The mediation effect for the $j$th mediator is quantified as $\alpha_j \cdot \beta_j$, where $\alpha_j$ represents the marginal association between the $j$th mediator and the exposure (the path-a effect), and $\beta_j$ denotes the conditional association between the outcome and the $j$th mediator given the exposure (the path-b effect). However, this mediation model set-up does not account for the interactions and nonlinear relationships between the outcome and mediators, which are commonly expected in the neuroimaging data.}
Another challenge in high-dimensional mediation analysis is how to ensure reproducibility in the identification of mediators. In practical high-dimensional exploratory research applications, where the signals are often rare and subtle, controlling the false discovery rate (FDR) is generally preferred over the family-wise error rate (FWER) to enhance selection power while managing type-I errors effectively \citep{benjamini_controlling_1995}. To control FDR in high-dimensional mediation studies, \citet{zhang_estimating_2016} performs a joint significance test with the subset of mediators after a screening and penalization step (HIMA). {\color{black}\cite{sampson2018fwer} proposed the Multiple Comparison Procedure (MCP), using robust statistical models to test direct and indirect associations to control FDR asymptotically for the independent mediators.} 
\citet{perera2022hima2} further proposed a hybrid method using debiased lasso post-selection inference and the JS-mixture test, which assumes a linear model in path-b and also relies on independence or weak dependence structure of the mediators {\color{black}(HIMA2). More recently, \cite{bai2024efficient} proposed an efficient three-step process designed to reduce computational overhead compared to HIMA2 while incorporating a robust FDR control mechanism (eHIMA). }\cite{dai2024_mediation} proposed an iterative approach by controlling FWER in path-a and FDR in path-b, which allows for complex mediator correlations and has theoretical FDR control guarantees {\color{black}(mediationFDR)}. {\color{black}This method has a stringent FWER control applied for path-a, which becomes increasingly conservative as the dimensionality of mediators grows.

{\color{black}With the ADNI structural MRI data as candidate mediators, the existing methods present limitations. First, path-a effects are typically small due to subtle gender differences in brain structure and function compared to the large within-gender variability. Second, interactions between various structural MRI measures potentially influence cognitive outcomes. 
For example, a study found that in individuals with lower gray matter (GM) volumes, increased functional connectivity between the medial prefrontal cortex and the lateral intraparietal cortex was associated with better memory performance. However, this association was not found in individuals with higher GM volumes. This means that GM volume might change how functional connectivity and memory are related \citep{he2012influence}.
Furthermore, another study found that the baseline white matter hyperintensity volume (WMHV) worked with age to affect CDR-SB scores in females but not in males, showing that brain structure changes might affect memory loss differently in men and women \citep{schweitzer2024sex}. Therefore, a more advanced nonlinear mediation framework is required to capture the true mechanistic pathways in AD progression.}


To address our data challenges, we develop a novel high-dimensional mediator selection framework with theoretical FDR control guarantees {\color{black}asymptotically}. We first frame the mediator selection procedure into a multiple testing problem with composite null hypotheses. Then we use the simultaneous knockoff \citep{dai2023_simultaneous} idea to construct the filter statistics for the mediators. The major challenge of this extension is how to handle the unavoidable dependence between path-a and path-b from the exposure variable and potential confounding variables. Our high-level strategy is to decompose path-b to obtain an independent component from path-a.
The main contributions of this paper are listed below:\\
1. We formulate a general multiple hypothesis testing framework for mediator selection from a high-dimensional candidate set with potentially nonlinear relationships with the exposure or the outcome. \\
2. For the traditional mediation model, when both path-a and path-b are with linear relationships, we develop a simplified knockoff generating using partial least squares to decompose the path-b to obtain an independent component of path-a. \\
3. We show the theoretical FDR control guarantees with mild model assumptions for path-b.\\
4. We perform extensive simulations to demonstrate the FDR control performance and power of our proposed method, and compare with existing methods. \\
5. We applied our proposed method to analyze the ADNI data. 

\section{Method}\label{sec.method}
\subsection{Model settings}\label{sec:setting}
 We develop our method under a general multiple testing problem for mediator selection \citep{dai2024_mediation}. Let $Y \in \R$ be a one-dimensional response, $X \in \R$ be a one-dimensional exposure 
 , and $\mathbf{M} = (M_1,\cdots, M_p) \in \R^{p}$ is the high dimensional mediators, $\mathbf{V} \in \R^{d}$ are low dimensional confounding variables. For $j\in \{1,\cdots, p\} := [p]$, we define the null hypotheses for path-a: $H_{aj}$ and path-b: $H_{bj}$:
\begin{align} \label{eqn:linear_mediation_model} 
    &\text{path-a:} ~~ H_{aj}: M_j \indep X | \mathbf{V}, \nonumber \\ &\text{path-b:}~~ H_{bj}: Y \indep M_j | M_{-j},X, \mathbf{V}, \nonumber \text{where $M_{-j} := \{M_k: k \in [p], k\neq j\}$}.
\end{align}

Our goal is to test the composite null hypotheses
$H_{0j} = H_{aj}\cup H_{bj} ~\text{for}~ j \in [p]$. We define $M_j$ as a mediator if we reject the hypothesis $H_{0j}$. 

As an example, traditional mediation analysis model assumes linear relationships for both path-a and path-b, i.e. \citet{mackinnon_introduction_2008}.
\begin{equation}\label{eqn:M_linear}
\text{path-a:}\quad M_j = c_{1j} + X\alpha_j + \mathbf{V}^\top \boldsymbol{\eta}_{1j} + \epsilon_{1j}, \quad \textnormal{for}~ j \in [p],
\end{equation}
where $c_{1j} \in \R$ is the intercept,  $\alpha_j \in \R$, $\boldsymbol{\eta}_{1j} \in \R^{d}$ are the coefficients for the exposure $X$ and confounding variable $\mathbf{V}$ respectively and $\epsilon_{1}=(\epsilon_{11},\cdots,\epsilon_{1p}) \in \R^p$ are the multivariate Gaussian noises.
\begin{equation}\label{eqn:Y_linear}
\text{path-b:}\quad Y = c_2 + X\gamma + \mathbf{M}^\top\boldsymbol{\beta} + \mathbf{V}^\top\boldsymbol{\eta}_2 + \epsilon_{2}, \end{equation} where $\mathbf{V} \in \R^{d}$ are low dimensional confounding variables. $c_2 \in \R$ is the intercept, $\gamma \in \R$, $\boldsymbol{\beta}\in \R^p$, $\boldsymbol{\eta}_2 \in \R^{d}$ are the coefficients. 


 Our framework does not restrict our model to be linear in either path-a or path-b. For path-a, the marginal models for each $M_j$ does not need to be linear. For example, we can consider replacing equation \eqref{eqn:M_linear} to a generative additive model $g_j^{-1}(\mathbb{E}[M_j])=h_1(X)+h_2(\V)$ or a more general non-parametric form $M_j=g_j(X,\V,\epsilon_{1j})$ {\color{black}with arbitrary pre-specified function $g_j$s}.
 
 For path-b, equation (\ref{eqn:Y_linear}) can be relaxed to allow for interactions between the mediators:

\begin{equation}\label{eqn:Y_interacton}
    Y = c + X\gamma + \mathbf{M}^\top\boldsymbol{\beta} + \bm{\Delta}^\top\boldsymbol{\delta}+\mathbf{V}^\top\boldsymbol{\eta}_{2} + \epsilon_{2}
\end{equation}
where $\bm{\Delta}=(M_1M_2,\cdots,M_{p-1}M_p) \in \R^{p(p-1)/2}$ is for two-way interaction terms and $\boldsymbol{\delta}$ is its coefficient.

Our method also allows for general nonlinear model forms of path-b to accommodate various types of response variables, including categorical and count data. For example, we can have generalized additive models like
$g^{-1}(\mathbb{E}[Y])=f_1(X)+\sum_{j=1}^pf_{2j}(M_j)+f_3(\V)+\epsilon_2$
or a nonparametric model $Y=f(X,\M,\Z,\V,\epsilon_2)$ {\color{black}where $f_1$, $f_{2j}$, $f_3$, $f$ are unknown functions}. 

 Under our general model setting, we frame the mediator selection problem as a multiple testing problem, where for each $j \in \{1,\cdots,p\}$, $H_{0j} =  H_{aj} \cup H_{bj}$ is a union null hypothesis, and $\mathcal{H}  = \{k\in [p]: H_{0k} ~\text{is true}\}$. Our goal is to identify a selection set $\widehat{S} \subset [p]$ and control for the FDR defined as

\begin{equation} \label{eqn:fdr_def}
    \text{FDR}(\hat{S}) = \EE{\frac{|\hat{S} \cap \mathcal{H}|}{|\hat{S}|\vee 1}}.
\end{equation}

\subsection{Preliminaries}
{\color{black}Before illustrating our algorithm, we briefly review the knockoff method, which is a powerful statistical technique to control the FDR in variable selection problems. The fundamental idea behind knockoff methods \citep{barber2015,candes2018} is to generate artificial "knockoff" variables that retain their inner correlation structure but are independent of the response variable. By comparing the behavior of original variables against their knockoffs, we can estimate and control the false discovery rate. \citet{barber2020} further demonstrated the robustness of the Model-X knockoff to errors in the underlying assumptions regarding the distribution of $X$. Additionally, \citet{huang2020} relaxed the assumptions of the Model-X knockoff, showing that FDR control is achievable as long as the distribution of $X$ is known up to a parametric model. \citet{romano2019} devised a Deep Knockoff machine employing deep generative models, while \citet{liu2019} developed a Model-X generating method utilizing deep latent variable models. More recently, \citet{bates2020} proposed an efficient general metropolized knockoff sampler, and \citet{spector2020} suggested constructing knockoffs that minimize constructability.

For the simultaneous variable selection from multiple experiments, the knockoff framework has been extended to examine multiple union null hypotheses across independent studies to address the challenges in reproducibility research for heterogeneous populations \citep {dai2023_simultaneous,li2022searching,wang2023controlling}. The idea of these approaches is to construct the knockoff copy and test statistics for each experiment individually, and then combine the test statistics through specially designed functions (i.e., the One Swap Flip Sign Function); therefore, exact FDR control can be provided when testing the union null hypotheses. In the causal mediator analysis, a variable can only be a mediator if it is both associated with the treatment (in path-a) and the outcome (in path-b). By conceptualizing mediation as a union null hypothesis testing problem across two conditional independence relationships, the simultaneous knockoffs framework provides an idea to control the false discovery rate when selecting high-dimensional mediators.} 

{\color{black} We also introduce some preliminary definitions for the Knockoff methods.

\begin{definition}[Swapping]\label{def:SWAP}
For a set $S \subseteq [p]$, and for a vector $\mathbf{U} = (U_1, \ldots, U_{2p}) \in \mathbb{R}^{2p}$, $\mathbf{U}_{\text{Swap}(S)}$ indicates the swapping of $U_j$ with $U_{j+p}$ for all $j \in S$.
\end{definition}

\begin{definition}[One swap flip sign function (OSFF)] \label{def:OSFF}
A function $f : \mathbb{R}^{2pK} \rightarrow \mathbb{R}^{p}$ is called an OSFF if it satisfies that for all $k \in [K]$ and all $S \subseteq [p]$,
\begin{align*}
f(&[\mathbf{Z}^1, \tilde{\mathbf{Z}}^1], \ldots, 
[\mathbf{Z}^k, \tilde{\mathbf{Z}}^k]_{\text{Swap}(S)}, \ldots, 
[\mathbf{Z}^K, \tilde{\mathbf{Z}}^K]) 
= f([\mathbf{Z}^1, \tilde{\mathbf{Z}}^1], \ldots, 
[\mathbf{Z}^k, \tilde{\mathbf{Z}}^k], \ldots, 
[\mathbf{Z}^K, \tilde{\mathbf{Z}}^K]) \odot \epsilon(S),
\end{align*}
where $\mathbf{Z}^k, \tilde{\mathbf{Z}}^k \in \mathbb{R}^{p}$ for $k \in [K]$, $\epsilon(S) \in \mathbb{R}^{p}$, and $\epsilon(S)_j = -1$ for all $j \in S$ and $\epsilon(S)_j = 1$ otherwise. Here, $\odot$ denotes the Hadamard product.
\end{definition}
}

\subsection{Mediator Selection algorithm}\label{sec:algorithm}

We design the following procedure to control for the FDR of the selected set of mediators {\color{black} as defined in equation \eqref{eqn:fdr_def}}.

\begin{itemize}
    \item Step 1: \textbf{Construct knockoff-based test statistics for path-a.}\quad 
    \paragraph{1.1 Path-a knockoff generation} For path-a, for each $j \in [p]$, we first generate knockoff $\tilde{M}_j$ for $M_j$, where $\tilde{M}_j$ is independent of $X$ and has identical distribution with $M_j$ conditioning on $\V$. We can generate knockoff $(\tilde{M}_j, \tilde{\V})$ for $(M_j, \V)$ using the Model-X knockoff construction approach \citep{candes2018}. Depending on the specific data type of $M_j$, we can use the Gaussian approach, second-order approach, sequential approach, or even deep learning-based approaches \citep{kormaksson2021sequential, romano2019} for the knockoff construction. {\color{black}More details for the Model-X Knockoff assumptions and constructions methods can be found in Web Appendix A.1.}
    \paragraph{1.2 Path-a test statistics computation} We fit a regression model of $X$ on $(M_j,\tilde{M}_j,\V)$ and use the absolute value of the coefficients for $M_j$ and $\tilde{M}_j$, or some variable importance measures {\color{black} (i.e., {\color{black} variable} importance score or SHapley Additive exPlanations (SHAP) value)} for $M_j$ and $\tilde{M}_j$ as test statistics $Z^a_{j}$, $\tilde{Z}^a_{j}$. Denote $\Z^a = (Z^a_1, Z^a_2, \cdots, Z^a_p)$ and $\tilde{\Z}^a = (\tilde{Z}^a_1, \tilde{Z}^a_2, \cdots, \tilde{Z}^a_p)$, under different model assumptions, the path-a statistics can be shown to satisfy a key property that for any $S\in \mathcal{H}_a$, $(\Z^a, \tilde{\Z}^a) \eqd (\Z^a, \tilde{\Z}^a)_{Swap(S)}$, or a weaker version $(Z^a_j, \tilde{Z}_j^a) \eqd(\tilde{Z}^a_j, Z^a_j)$ for any $j\in \mathcal{H}_a$ \citep{candes2018}.

{\color{red}
    }
    \item Step 2: \textbf{Construction of knockoff based test-statistics for path-b.}\quad 
    For path-b, we will construct knockoff $\tilde{\M}^b$ for $\M$ conditioning on $\V$ and $X$. This can be done through generating fixed or Model-X knockoff of $(\M,\V,X)$ as $(\tilde{\M}^b,\tilde{\V}^b,\tilde{X}^b)$. Denote $\tilde{\M}^b = (\tilde{M}_1, \tilde{M}_2, \cdots, \tilde{M}_p)$.
    We then fit penalized regression models or non-parametric regression models (e.g. Random Forests or a Deep Learning model) for $Y$ on $(\M,\tilde{\M}^b,\V,X)$ and use the absolute value of coefficients or certain importance measure of variables (e.g. importance score or SHAP value) for the variables $M_j$, $\tilde{M}_j$ as $Z^b_j$ and $\tilde{Z}^b_j$. Denote $(\Z^b, \tilde{\Z}^b) = (Z^b_1,\cdots,Z^b_p, \tilde{Z}^b_1, \cdots, \tilde{Z}^b_p)$, which can be shown to satisfy the requirement that for any $S\in \mathcal{H}_b$, $(\Z^b, \tilde{\Z}^b) \eqd (\Z^b, \tilde{\Z}^b)_{Swap(S)}$ \citep{candes2018}. {\color{black}Further details on the test statistics compatible with each knockoff construction method are provided in Web Appendix A.2.}

\begin{remark} \label{rmk:PLS}
    When the linear model is assumed for both path-a and path-b, we propose an alternative way to obtain knockoff-based statistics using partial least squares (PLS). Specifically, we first fit linear regression models for $Y$, $M_j$ on $X$ and $\V$ and denote the residuals as $r_Y$, $r^b_{Mj}$ respectively. Then we can generate the fixed or Model-X knockoff $\tilde{r}^b_{Mj}$ and run the regression of $r_X$ on $r^b_{Mj}$ and $\tilde{r}^b_{Mj}$ to obtain $(\Z^b, \tilde{\Z}^b)$.
\end{remark}

    \item Step 3: \textbf{Combine test statistics from path-a and path-b.}\quad After we construct statistics $(\Z^a, \tilde{\Z}^a)$ and $(\Z^b, \tilde{\Z}^b)$, using an OSFF as defined in Definition \ref{def:OSFF} to obtain test statistics $\W$. 
The simplest OSFF is $\W=(\Z^a-\tilde{\Z}^a)\odot (\Z^b-\tilde{\Z}^b)$, where $\odot$ is defined as the elementwise product. When a nonlinear model is assumed, there is no guarantee that $(\Z^a,\tilde{\Z}^a)\indep (\Z^b,\tilde{\Z}^b)$ but a data splitting will be able to solve the problem theoretically. In practice, such data splitting may not be necessary. Numerically based on our simulations, even without data splitting, the FDR are in general under control.
    
    
     \item Step 4: \textbf{Cut-off point set-up.}\quad 
     The property of $\W$ constructed from step 3 allows us to estimate the proportion of false discovery in our selection $\widehat{S}(t)$ as 
$\text{FDP}(t) \approx \widehat{\text{FDP}}(t) = \frac{\# W_j\leq -t}{(\# W_j \geq t) \vee 1}$, which supports the usage of the following stopping rules from the SeqSelect procedure \citep{barber2015} to determine the cut-off points for finite sample FDR control. The cut-off point 
\begin{equation*}
    \tau = \min\left\{t>\mathcal{W}_+: \frac{\#\{j:W_j \leq -t\}}{\#\{j:W_j\geq t\}\vee 1}\leq q \right\} ~~\textnormal{(Knockoff)},
\end{equation*}
controls the modified FDR (mFDR) (defined as $\text{mFDR} = \EE{\frac{|\{j\in \hat{S}\cap \mathcal{H}\}|}{|\hat{S}|+1/q_2}}$). A more conservative cut-off point 
\begin{equation*}
    \tau+ = \min\left\{t>\mathcal{W}_+: \frac{1+\#\{j:W_j \leq -t\}}{\#\{j:W_j\geq t\}\vee 1}\leq q\right\} ~~\textnormal{(Knockoff+)},
\end{equation*}
controls the FDR.
 \end{itemize}

With the above procedures, the final FDR in equation \eqref{eqn:fdr_def} or a modified FDR 
\begin{equation}\label{eqn:mfdr}
\textnormal{mFDR}=\EE{\frac{|\widehat{S}\cap \mathcal{H}|}{|\widehat{S}|+1/q}} \end{equation}
for mediators can be controlled at level $q$ when $p\rightarrow \infty$ and weakly dependence assumption.

\subsection{Quantification of the mediation effects}\label{sec:effect}
After selecting the set of mediators, we further make inference on the natural direct effect (NDE) and natural indirect effect (NIE) under our general model settings. 

{\color{black} Under the traditional mediation models with the assumptions of linearity, no unmeasured confounding, and no exposure-mediator interaction \eqref{eqn:M_linear}-\eqref{eqn:Y_linear}, the NDE can be directly defined as $\gamma$, and the NIE=$\sum_{j=1}^p \alpha_j\beta_j$.}

{\color{black}For the general nonparametric model cases, under the counterfactual framework, denote  $M_k(x)$ for $k \in [p]$ as the $k$th potential mediator if the exposure value is set as $x$; and $\mathbf{M}(x) \in \R^{p}$ as the potential mediators if the exposure value is set as $x$, i.e. $\M(x) = (M_1(x), \cdots, M_p(x))$. For the outcome $Y$, define $Y(x,M_1(x_1'), \cdots, M_p(x_p'))$ as the potential outcome value when the exposure is set at $x$ and mediators are set as $(M_1(x_1'), \cdots, M_p(x_p'))$.  

We have $NDE=\mathbb{E}Y(1, \mathbf{M}(0)) - Y(0, \mathbf{M}(0))$, $NIE=\mathbb{E}Y(1, \mathbf{M}(1)) - Y(1, \mathbf{M}(0))$.

Following \citet{imai2010general}, we implement a Monte Carlo integration algorithm to  evaluate the integration of $
\widehat{\mathbb{E}}\left[Y\left(x, M_1(x_1), \dots, M_p(x_p)\right)\right]
$ numerically.\\
\textbf{Step 1:} Sample $(M_1(x_1), \ldots, M_p(x_p))$ for different values of $(x_1, \ldots, x_p) \in \{0,1\}^p$ for all $\boldsymbol{V}_i$;\\   
\textbf{Step 2:} Sample $Y(x, M_1(x_1), \ldots, M_p(x_p))$ for different values of $x \in \{0,1\}$ and $(x_1, \ldots, x_p) \in \{0,1\}^p$ for all $\boldsymbol{V}_i$;\\
\textbf{Step 3:} Compute $\widehat{\mathbb{E}} \left[ Y(x, M_1(x_1), \ldots, M_p(x_p)) \right]$ by averaging sampled\\ $Y(x, M_1(x_1), \ldots, M_p(x_p))$ and average over individual $i$. Then compute $\widehat{\text{NDE}}, \widehat{\text{NIE}},$ and $\widehat{\text{NIE}}_k$:

    \begin{align*}
  & \widehat{\text{NDE}} = \widehat{\mathbb{E}}\left[Y(1, \mathbf{M}(0)) - Y(0, \mathbf{M}(0))\right],
\widehat{\text{NIE}} = \widehat{\mathbb{E}}\left[Y(1, \mathbf{M}(1)) - Y(1, \mathbf{M}(0))\right],\text{ and}\\
   & \widehat{\text{NIE}}_k =  \widehat{\mathbb{E}}\left[ Y\left(0,\M_{1:k-1}(0) ,M_k(1), \M_{k+1:p}(0)  \right) \right] - \widehat{\mathbb{E}}\left[ Y\left(0, \M(0) \right) \right],
\end{align*}
\
where $\M_{1:k-1}(0) = (M_1(0),\cdots, M_{k-1}(0))$, and $\M_{k+1:p}(0) = (M_{k+1}(0),\cdots,M_p(0))$.

}

\subsection{Theoretical results}
\begin{theorem}\label{thm:1}
With the general model setting and assumptions stated as in Section \ref{sec:setting}, using the general algorithm in Section \ref{sec:algorithm}, with Model-X knockoff construction and data splitting, with the Knockoff filter, the final mFDR for mediators can be controlled at level $q$ and with the Knockoff+ filter, the final FDR for mediators can be controlled at level $q$ when $p\rightarrow \infty$ with a choice of path-a statistics that satisfy the weak dependence assumption. 
\end{theorem}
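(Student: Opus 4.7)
The plan is to recast the theorem as an application of the simultaneous-knockoff framework of \citet{dai2023_simultaneous} with $K=2$ ``experiments'' corresponding to path-a and path-b. I would proceed in four stages. First, establish a pairwise exchangeability for the path-a statistics: for every $j\in\mathcal{H}_a$, the Model-X construction of $\tilde M_j$ ensures $(M_j,\tilde M_j)$ is exchangeable given $(\V,X,M_{-j})$, and because the path-a test statistic is a symmetric function of the slots occupied by $M_j$ and $\tilde M_j$ in the regression of $X$, we obtain $(Z^a_j,\tilde Z^a_j)\eqd(\tilde Z^a_j,Z^a_j)$ conditional on the remaining coordinates. Second, the path-b knockoffs are built conditional on $(\V,X)$, so the Cand\`es--Fan--Janson--Lv argument yields the full subset-swap property $(\Z^b,\tilde\Z^b)\eqd(\Z^b,\tilde\Z^b)_{\mathrm{Swap}(S)}$ for any $S\subseteq\mathcal{H}_b$. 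The data-splitting device makes $(\Z^a,\tilde\Z^a)\indep(\Z^b,\tilde\Z^b)$, which is what lets me manipulate the two paths separately.

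Third, I use Definition \ref{def:OSFF} to propagate these exchangeabilities through the combined statistic $\W$. For any $j\in\mathcal{H}=\mathcal{H}_a\cup\mathcal{H}_b$, either $j\in\mathcal{H}_a$ or $j\in\mathcal{H}_b$; in either case there is a single-coordinate swap in one of the two experiments that (a) leaves the joint distribution of the inputs invariant, by the relevant exchangeability from stage one or two, and (b) by the OSFF property flips the sign of $W_j$ while leaving every other $W_k$ unchanged. Composing one-swap flips across the two independent experiments, and taking the symmetric difference of their index sets, gives, for any $S\subseteq\mathcal{H}$, that $\W$ and $\W\odot\epsilon(S)$ have the same conditional distribution given $|\W|$, which is exactly the sign-flip input required by the Barber--Cand\`es filter.

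Fourth, I invoke the SeqSelect/knockoff filter martingale argument of \citet{barber2015}: the sign-flip property implies that, conditional on $|\W|$, the signs of $W_j$ for $j\in\mathcal{H}$ can be analyzed as if they were independent symmetric Bernoullis, so $\#\{j\in\mathcal{H}:W_j\ge t\}$ is stochastically dominated in the appropriate sense by $\#\{j\in\mathcal{H}:W_j\le -t\}+1$. Combined with the definitions of $\tau$ and $\tau^+$, this yields finite-sample mFDR control under the Knockoff filter and finite-sample FDR control under Knockoff+, and these transfer to the union-null $\mathcal{H}$ via the OSFF-based sign-flip established in stage three.

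The main obstacle I anticipate is stage three when the path-a statistics only satisfy the weaker pairwise $(Z^a_j,\tilde Z^a_j)\eqd(\tilde Z^a_j,Z^a_j)$ rather than subset-swap invariance. Because the per-$j$ regressions of $X$ on $(M_j,\tilde M_j,\V)$ share the common response $X$, the joint swap property across subsets of $\mathcal{H}_a$ will typically fail and the clean conditional-sign-symmetry argument above must be weakened. This is precisely where the weak-dependence assumption does the real work: it should enable a concentration argument showing that, as $p\to\infty$, the empirical counts entering $\widehat{\mathrm{FDP}}(t)$ concentrate around their population analogues uniformly in $t$, so that pairwise flip-sign alone is enough to deliver asymptotic mFDR (respectively FDR via Knockoff+) at level $q$. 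Pinning down a clean quantitative weak-dependence condition, and a matching uniform concentration inequality for the ratio form of $\widehat{\mathrm{FDP}}(t)$ while keeping the stopping time $\tau^{(+)}$ adapted to the right filtration, will be the delicate technical step.
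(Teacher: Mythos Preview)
Your stages 1--3 match the paper almost exactly: pairwise exchangeability for path-a, full subset-swap for path-b from the conditional Model-X construction, independence from data splitting, and the OSFF to transfer these to a marginal sign-flip $W_j\eqd -W_j$ for each $j\in\mathcal{H}$. The divergence is in stage 4, where you propose to handle the lack of joint sign symmetry by a uniform-in-$t$ concentration argument for $\widehat{\fdp}(t)$. The paper avoids this entirely. It converts each $W_j$ into a coarse $p$-value ($p_j=1/2$ if $W_j>0$, $p_j=1$ otherwise), runs the backward super-martingale argument of \citet{barber2015} on $M(k)=V^{+}(k)/(1+V^{-}(k))$ verbatim (using exchangeability of the null $p_j$'s for the conditional-probability step), and invokes weak dependence only once, at the boundary term: bounding $\EE{M(m)}=\EE{X/(1+N-X)}$ with $X=\#\{j\in\mathcal{H}:p_j\le 1/2\}$ via the weak law of large numbers, so $X/N\to c\le 1/2$ and hence $\EE{X/(1+N-X)}\to c/(1-c)\le 1$ as $p\to\infty$. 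This keeps the stopping time $\widehat{k}$ adapted to the original Barber--Cand\`es filtration throughout and sidesteps exactly the difficulty you flag---no uniform concentration over $t$ and no need to re-engineer the filtration. Your route could in principle be made to work, but the paper's device of pushing all the weak-dependence slack into a single scalar expectation at $k=m$ is both shorter and technically cleaner.
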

The proof of Theorem \ref{thm:1} is postponed to Web Appendix B. In Theorem \ref{thm:1} 
we provide the theoretical guarantee for the algorithm in Section \ref{sec:algorithm} under the assumption of the large number of weakly dependent path-a statistics. 
{\color{black}
\section{Simulation} \label{sec:simulation}
We conducted simulations to evaluate the performance of our proposed method under both linear and nonlinear model settings. We compare our proposed method with five other methods: MCP\citep{sampson2018fwer}, HIMA\citep{zhang_estimating_2016}, HIMA2\citep{perera2022hima2}, eHIMA\citep{bai2024efficient} and mediationFDR\citep{dai2024_mediation}. 

\subsection{Data Generation and Simulation Settings} 
\paragraph{Linear Model Settings} \label{sec:linear}
For each subject $i \in [n]$, the exposure variable $X_i \in \mathbb{R}$ and the covariate $V_{i} \in \mathbb{R}$ are generated from a multivariate normal distribution, with mean $\begin{pmatrix} 0 \\ 0 \end{pmatrix}$ and variance-covariance matrix $\begin{pmatrix} 1 & 0.5 \\ 0.5 & 1 \end{pmatrix}$. The correlated potential mediators $\M_i = (M_{i1}, \dots, M_{ip})^\top$ is generated using $\M_i = X_i \bm{\alpha} + V_{i}\bm{\eta}_{1}+\bm{\epsilon}_{1i}$, where $\bm{\alpha} = (\alpha_1, \dots, \alpha_p)^\top$,$\bm{\eta_1} = (\eta_{11}, \dots, \eta_{1p})^\top$ and $\bm{\epsilon}_{1i} \sim \mathcal{N}(\mathbf{0}, \bm{\Sigma})$, where $\bm{\Sigma}$ is a variance-covariance matrix with compound symmetry structure (i.e.,diagonal entries are 1 and off-diagonal entries are $\rho$).
The outcome $Y_i$ is generated as $Y_i = X_i\gamma + \mathbf{M}^\top\boldsymbol{\beta} +V_{i}\eta_{2}+\epsilon_{2i}$, where $\bm{\epsilon}_{2i} \sim \mathcal{N}(0, \sigma^2)$. 


\paragraph{Nonlinear Model Settings} \label{sec:nonlinear} We explore two nonlinear scenarios: (1) Two-way interactions exist in path-b; (2) The path-b model is an additive model with element-wise cosine transformed $M_j$'s.

\paragraph{\textbf{Setting 1}: Two-way interactions} {\color{black} We consider an scenario where path-a remain the same as the linear model setting; while in path-b, only interaction effects (i.e, interaction between mediators) are present.} The generation of the exposure variable $X_i$ and potential mediators $\M_i$ are the same as the linear setting. The outcome $Y_i$ is generated as $Y_i = X_i\gamma + Z_i \boldsymbol{\delta}+\epsilon_{2i}$, where $Z_i=(M_{i1}M_{i2},\dots,M_{i(p-1)}M_{ip})$, $\epsilon_{2i} \overset{i.i.d}{\sim} \mathcal{N}(0, \sigma^2)$.

\paragraph{\textbf{Setting 2}: Cosine transformation}
For this setting, the generation of the exposure variable $X_i$ and potential mediators $\M_i$ are still the same as the linear model setting. We extend our path-b model to a nonlinear model by applying the cosine function to each element of $\M$. Specifically, in path-b, the outcome $Y$ is generated as $Y_i = X_i\gamma + \sum_{j=1}^p \beta_j\cdot cos(M_{ij}) +\epsilon_{2i}$, where $\epsilon_{2i} \overset{i.i.d}{\sim} \mathcal{N}(0, \sigma^2)$. 

Details of the data generation process and parameter settings are provided in Appendix C.1, and method specifications in Appendix C.2.

\subsection{Simulatiom Results}

Figure\ref{fig:chg_linear} presents the simulation results for the empirical power and FDR in the linear setting. Our proposed method, Generalized Knockoff Mediator Selection (GKMS), demonstrates very good performance, achieving relatively high power and effective FDR control across all the experimented settings. The power reduces for all methods when the path-a signal is weak, but GKMS performs the best under such scenarios. HIMA2 also exhibits strong performance in terms of power. However, under several conditions, it fails to control the FDR (e.g. the setting with small effect sizes for path-a $a=0.2$ and path-b $b=0.2$, lower correlation structures $\rho=0.2$, and smaller dimensions $p=100$, Figure \ref{fig:chg_linear}). Both MCP and eHIMA achieve high power, but MCP can not control the FDR when mediators are correlated ($\rho>0$), and eHIMA fails to control FDR for higher dimensional mediators. HIMA and MediationFDR effectively control FDR in all scenarios but at the cost of reduced power. MediationFDR has low power when path-a signals are weak ($a < 0.3$) which is as expected as it controls the FWER in path-a.

Tables \ref{tab:sim2} and \ref{tab:sim3} summarize the empirical results for nonlinear models with interactions (\textit{\textbf{Setting 1}}) and models with cosine transformed mediators (\textit{\textbf{Setting 2}}). Across both settings, most methods control the false discovery rate (FDR) effectively, except for eHIMA. In terms of power, our proposed method demonstrates satisfactory performance, consistently achieving the good power across all settings, highlighting its signal detection capabilities in nonlinear settings. In contrast, all other methods show very low powers {\color{black} since our settings have approximately no main effects while focusing on interaction and nonlinearly transformed effects.} Although our proposed method (GKMS) shows a slight decrease in power as the correlation increases, it remains stable in FDR control. Additionally, in the high-dimensional setting ($p=400$), and weaker signal effects in path-a ($a=0.2, 0.4$), the performance of most methods degrades substantially, while GKMS consistently achieves high power and reliable FDR control, further demonstrating its robustness to both dimensionality and weak signal conditions. 

Overall, our proposed method demonstrates robust and reliable performance across linear and nonlinear models, making it an effective tool for detecting true signals under complex scenarios, especially in high-dimensional, correlated data and weak signal conditions. Additional simulation results can be found in Appendix C.3.
}
\begin{figure}[!ht]
    \centering
     \includegraphics[scale=0.7]{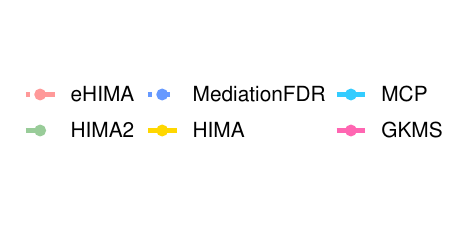}\\
    \includegraphics[scale=0.33]{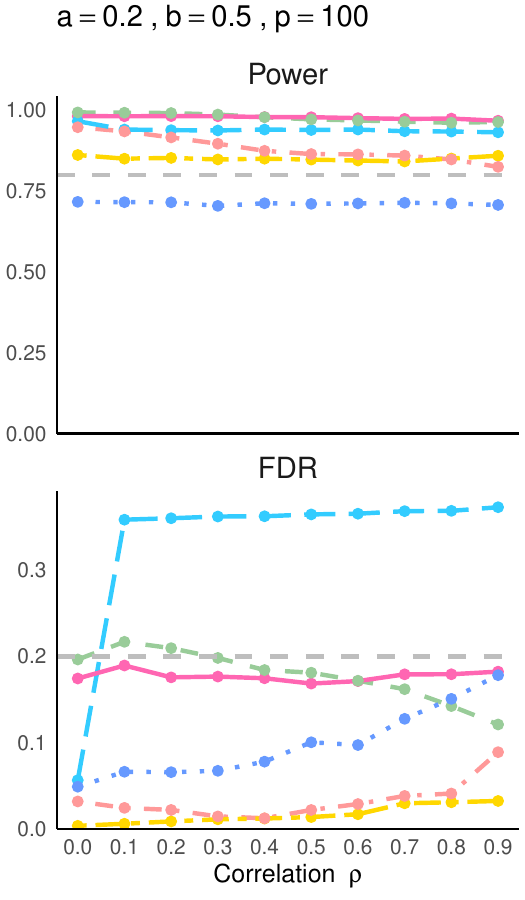}
    \includegraphics[scale=0.33]{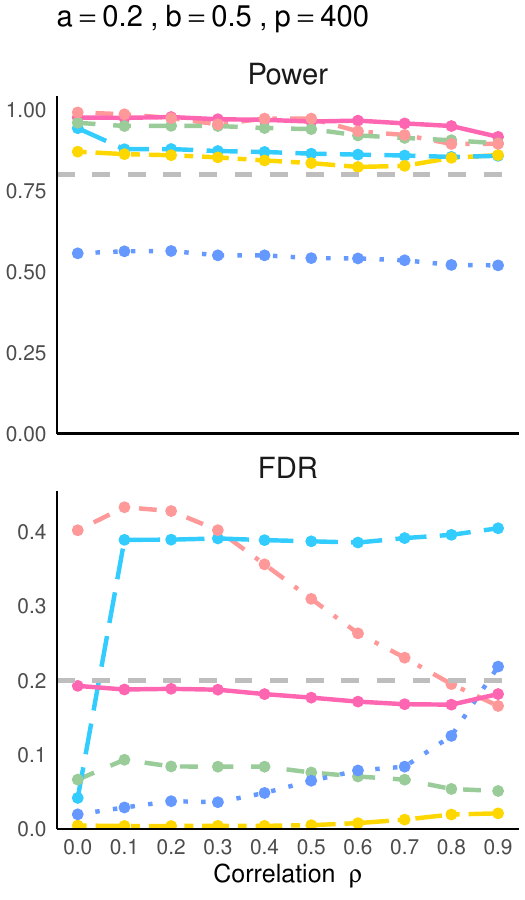}
    \includegraphics[scale=0.33]{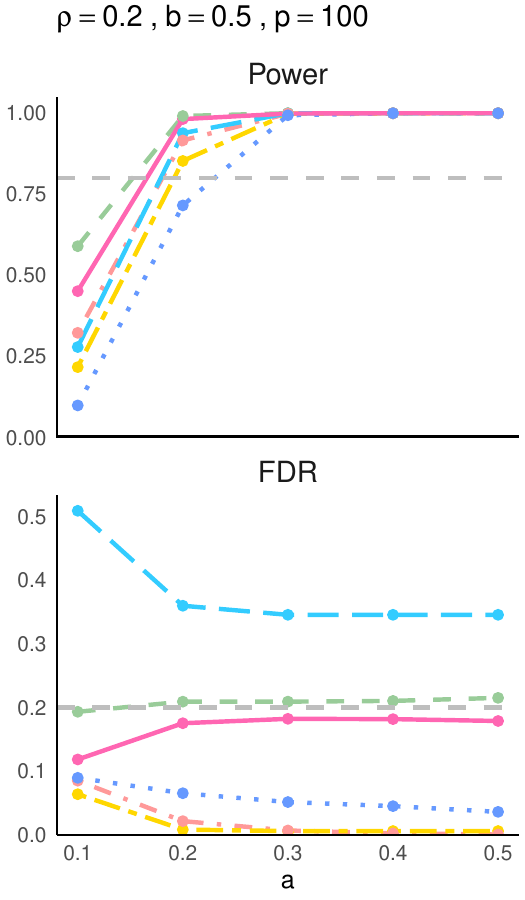}
    \includegraphics[scale=0.33]{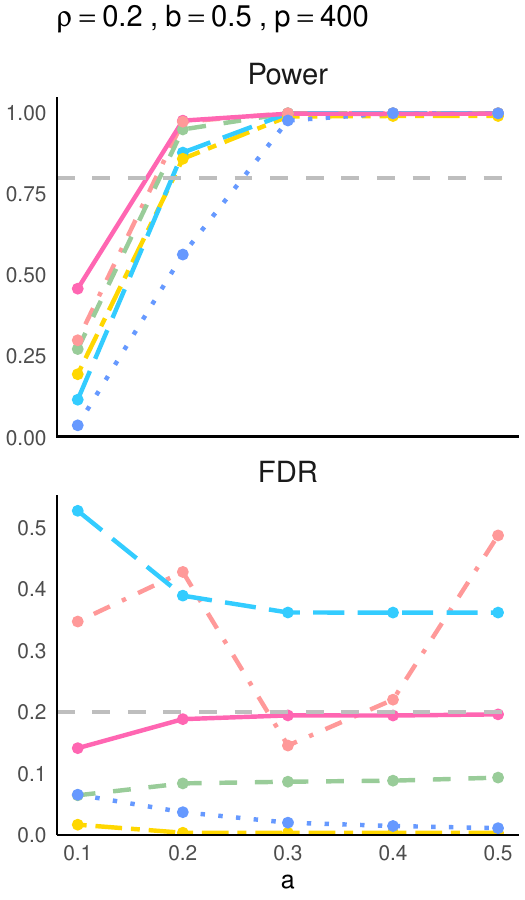}\\
    \includegraphics[scale=0.33]{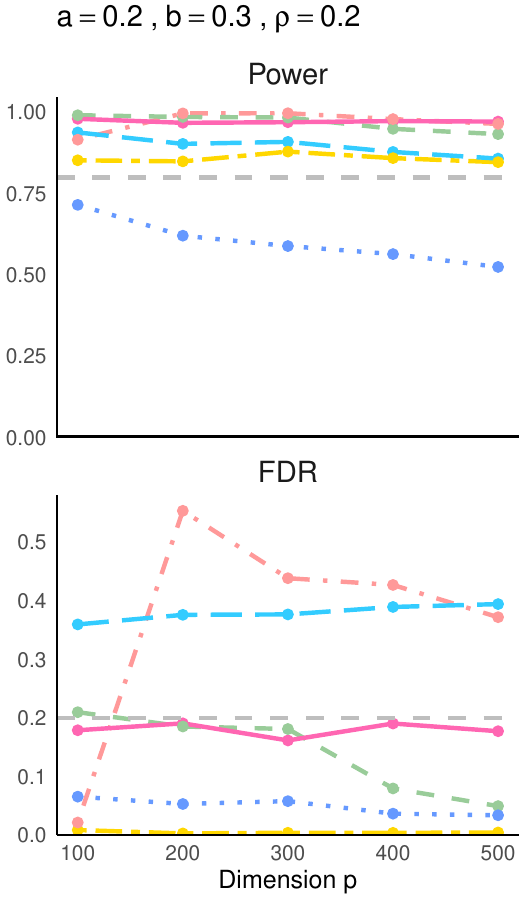}
    \includegraphics[scale=0.33]{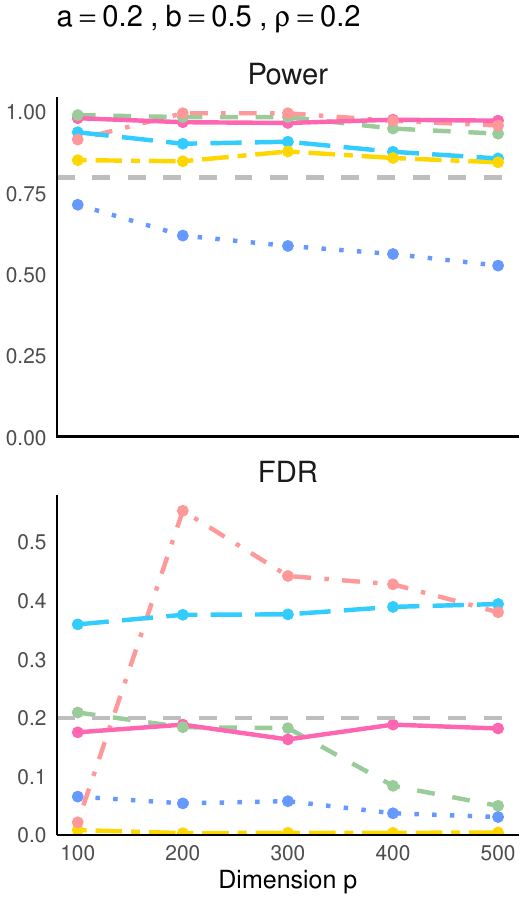}
    \includegraphics[scale=0.33] {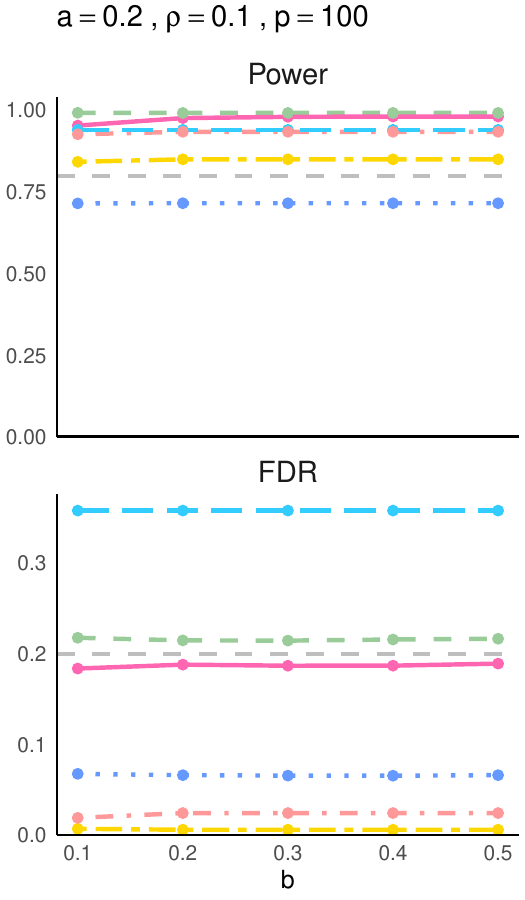}
    \includegraphics[scale=0.33]{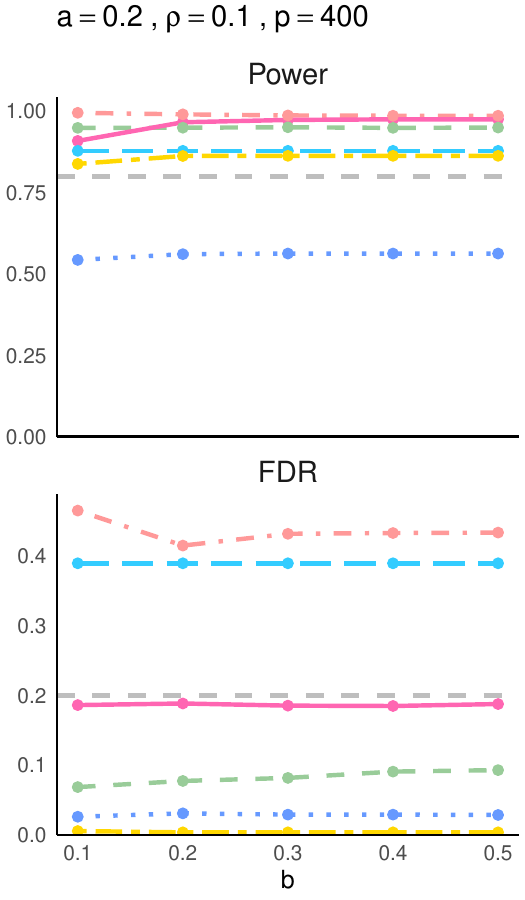}
   \caption{Simulation results for the empirical power and false discovery rate (FDR) with sample size $n=1000$ are shown for correlated mediators with a defined correlation structure across 100 replications. The results are stratified by the effect size of the correlation structure ($\rho$),
   the number of potential mediators (p), the effect size of path-a (a), and the effect size of path-b (b) in the linear models.}
   \label{fig:chg_linear}
\end{figure}

\begin{table}[ht]
    \caption{Simulation results for the empirical power and FDR of \textit{\textbf{Setting 1}} (nonlinear), with $n=1000$, $p=100$}
     \centering
    \begin{tabular}{ccc|cccccc}
    \hline
    \multicolumn{3}{c|}{} & \multicolumn{6}{c}{\textbf{FDR}} \\ \hline
   $\rho$ & $a$ & $b$ & HIMA & HIMA2 & eHIMA & MCP & MediationFDR & GKMS \\ \hline
   0.1   & 0.4 & 0.5   & 0.020 & 0.038 & 0.337 & 0.025 & 0.000 &\textbf{0.083} \\ 
   0.1   & 0.4 & 1   & 0.040 & 0.033 & 0.312 & 0.020 & 0.000 & \textbf{0.077} \\ 
   0.1   & 0.6 & 0.5   & 0.040 & 0.048 & 0.449 & 0.010 & 0.000 & \textbf{0.113} \\ 
   0.1  & 0.6 & 1   & 0.030 & 0.058 & 0.469 & 0.010 & 0.000& \textbf{0.115} \\    
   0.3 & 0.4 & 0.5   & 0.095 & 0.045 & 0.266 & 0.081 & 0.000 & \textbf{0.112} \\ 
   0.3 & 0.4 & 1   & 0.085 & 0.058 & 0.268 & 0.096 & 0.000 & \textbf{0.121} \\ 
   0.3 & 0.6 & 0.5  & 0.040 & 0.063 & 0.311 & 0.059& 0.000 & \textbf{0.111} \\ 
   0.3 & 0.6 & 1   & 0.030 & 0.058 & 0.325 & 0.058 & 0.000 & \textbf{0.111} \\ 
   0.5 & 0.4 & 0.5   & 0.095 & 0.071 & 0.307 & 0.102 &0.000 & \textbf{0.100} \\ 
   0.5 & 0.4 & 1  & 0.095 & 0.087 & 0.309 & 0.101 & 0.000 & \textbf{0.108} \\ 
   0.5 & 0.6 & 0.5   & 0.075 & 0.096& 0.347 & 0.093 & 0.000 & \textbf{0.097} \\ 
   0.5 & 0.6 & 1   & 0.070 & 0.105 & 0.332 & 0.088 & 0.000 & \textbf{0.085} \\ \hline
    \multicolumn{3}{c|}{} & \multicolumn{6}{c}{\textbf{Power}} \\ \hline
   $\rho$ & $a$ & $b$ & HIMA & HIMA2 & eHIMA & MCP & MediationFDR & GKMS \\ \hline
  0.1   & 0.4 & 0.5   & 0.004 & 0.013 & 0.044 & 0.006 & 0.000 & \textbf{0.754} \\ 
   0.1   & 0.4 & 1   & 0.003 & 0.012 & 0.045 & 0.006 & 0.000 & \textbf{0.754} \\ 
   0.1   & 0.6 & 0.5   & 0.003 & 0.020 & 0.035 & 0.005 & 0.000 & \textbf{0.862} \\ 
   0.1  & 0.6 & 1   & 0.003 & 0.020 & 0.034 & 0.004 & 0.000 & \textbf{0.868} \\    
   0.3 & 0.4 & 0.5   & 0.005 & 0.017 & 0.058& 0.089 & 0.000 & \textbf{0.812} \\ 
   0.3 & 0.4 & 1   & 0.005 & 0.018& 0.059 & 0.088& 0.000 & \textbf{0.804} \\ 
   0.3 & 0.6 & 0.5  & 0.001 & 0.011 & 0.057 & 0.068 & 0.000 & \textbf{0.841} \\ 
   0.3 & 0.6 & 1   & 0.001 & 0.011 & 0.059 & 0.068 & 0.000 & \textbf{0.837} \\ 
   0.5 & 0.4 & 0.5   & 0.005 & 0.031 & 0.047 & 0.166 & 0.000 & \textbf{0.795} \\ 
   0.5 & 0.4 & 1  & 0.006 & 0.033 & 0.048 & 0.168 & 0.000 & \textbf{0.788} \\ 
   0.5 & 0.6 & 0.5   & 0.005 & 0.032& 0.040 & 0.140 & 0.000 & \textbf{0.809} \\ 
   0.5 & 0.6 & 1   & 0.005 & 0.034 & 0.043 & 0.137 & 0.000 &\textbf{0.776} \\ \hline
    \end{tabular}
    
    \label{tab:sim2}
\end{table}
\begin{table}[ht]
    \caption{Simulation results for the empirical power and FDR of \textit{\textbf{Setting 2}} (nonlinear), with $n=1000$, $p=400$}
     \centering
   
    \begin{tabular}{ccc|cccccc}
    \hline
    \multicolumn{3}{c|}{} & \multicolumn{6}{c}{\textbf{FDR}} \\ \hline
   $\rho$ & $a$ & $b$ & HIMA & HIMA2 & eHIMA & MCP & MediationFDR & GKMS \\ \hline
   0   & 0.2 & 0.5 & 0.020 & 0.000 & 0.418 & 0.010 & 0.000 & \textbf{0.162} \\
   0   & 0.2 & 1.0 & 0.000 & 0.005 & 0.385 & 0.005 & 0.000 & \textbf{0.150} \\
   0   & 0.5 & 0.5 & 0.000 & 0.000 & 0.451 & 0.000 & 0.000 & \textbf{0.139} \\
   0   & 0.5 & 1.0 & 0.000 & 0.000 & 0.438 & 0.000 & 0.000 & \textbf{0.148} \\
   0.3 & 0.2 & 0.5 & 0.050 & 0.065 & 0.546 & 0.059 & 0.000 & \textbf{0.124} \\
   0.3 & 0.2 & 1.0 & 0.050 & 0.062 & 0.513 & 0.051 & 0.000 & \textbf{0.116} \\
   0.3 & 0.5 & 0.5 & 0.070 & 0.043 & 0.608 & 0.040 & 0.000 & \textbf{0.102} \\
   0.3 & 0.5 & 1.0 & 0.070 & 0.053 & 0.551 & 0.037 & 0.000 & \textbf{0.112} \\ \hline
    \multicolumn{3}{c|}{} & \multicolumn{6}{c}{\textbf{Power}} \\ \hline
   $\rho$ & $a$ & $b$ & HIMA & HIMA2 & eHIMA & MCP & MediationFDR & GKMS \\ \hline
   0   & 0.2 & 0.5 & 0.002 & 0.001 & 0.051 & 0.001 & 0.000 &\textbf{0.833} \\
   0   & 0.2 & 1.0 & 0.001 & 0.002 & 0.050 & 0.003 & 0.000 &\textbf{0.858} \\
   0   & 0.5 & 0.5 & 0.001 & 0.001 & 0.051 & 0.001 & 0.000 &\textbf{0.881} \\
   0   & 0.5 & 1.0 & 0.001 & 0.001 & 0.048 & 0.002 & 0.000 &\textbf{0.900} \\
   0.3 & 0.2 & 0.5 & 0.001 & 0.002 & 0.117 & 0.051 & 0.000 &\textbf{0.706} \\
   0.3 & 0.2 & 1.0 & 0.001 & 0.002 & 0.130 & 0.053 & 0.000 &\textbf{0.729} \\
   0.3 & 0.5 & 0.5 & 0.001 & 0.004 & 0.105 & 0.051 & 0.000 &\textbf{0.618} \\
   0.3 & 0.5 & 1.0 & 0.001 & 0.003 & 0.121 & 0.052 & 0.000 &\textbf{0.655} \\ \hline
    \end{tabular}
    
    \label{tab:sim3}
\end{table}
\section{Data Analysis} \label{sec:analysis}
The Alzheimer’s Disease Neuroimaging Initiative (ADNI) is a large-scale longitudinal study designed to identify biomarkers for the early detection and progression of Alzheimer’s disease (AD) \citep{jack2008alzheimer}. 
Evidence suggests that the development and progression of AD differ between men and women, but the mechanisms behind the gender differences remain poorly understood \citep{pike2017sex}. Structural MRI has been established as a robust biomarker for assessing the stage and severity of neurodegeneration in AD \citep{vemuri2010role}. {\color{black}In this study, we focus on 1097 mild cognitive impairment (MCI) patients (58\% Male, 42\% Female) and aim to identify structural MRI-derived measures in the baseline that mediate the relationship between gender and changes in cognitive performance.}

{\color{black}To obtain the structural MRI data, we first convert ADNI dataset to follow the Brain Imaging Data Structure (BIDS) rule via Clinica \citep{routier2021clinica} and applied fMRIPrep \citep{esteban2019fmriprep} with the FreeSurfer \citep{fischl2012freesurfer} reconstruction workflow to pre-process the data. This pipeline outputs 555 structural MRI-derived measures of each subjects (e.g. cortical thickness, surface area and brain volumes) corresponding to specific brain regions or morphometric features, such as the average cortical thickness in various gyri and sulci, and volumetric data for subcortical structures.  The full list of the 555 structural MRI-derived measures can be found in Web Appendix D.1.  
}


{\color{black}The outcome of interest in this study is the change in multiple cognitive assessment scores from baseline to 17 years of follow-up. These assessments are typically conducted at baseline and repeated at regular intervals (e.g., every 3 or 6 months). To account for the missing visits, we apply the last observation carried forward (LOCF) method \citep{lachin2016fallacies} to impute missing cognitive scores. To evaluate different aspects of cognitive function, we apply multiple cognitive scores, including the Clinical Dementia Rating Sum of Boxes (CDR-SB), Alzheimer’s Disease Assessment Scale-Cognitive Subscale (ADAS-Cog) and Mini-Mental State Examination (MMSE). Higher CDR-SB and ADAS-Cog scores, and lower MMSE scores, each indicate greater AD severity. More details about those cognitive scores can be found in Appendix D.2.


We apply the proposed GKMS method to identify important structural MRI-derived measures from a set of $p=555$ candidate correlated mediators that may mediate the relationship between gender and changes in cognitive scores among $n=1097$ MCI patients. 
} We consider Age at baseline as a confounding variable $V$. The FDR is set as $q=0.2$. For exposure X, we set males as the reference group. $\Y$ are the changes from baselines in the cognitive assessment scores CDR-SB, ADAS-Cog and MMSE. Both the outcome $\Y$ and the mediators $\M$ are scaled, ensuring that the values the $\alpha$ and $\beta$ are the standardized regression coefficients. {\color{black}To capture the interactions among the mediators, we applied our proposed GKMS method with Random Forests in path-b. Table \ref{tab:GKMS} summarizes the mediator selection results with their estimated mediation effects for changes in CDR-SB, ADAS-Cog, and MMSE scores. The mediator effects were estimated using the g-formula with numerical Monte Carlo integration as described by \citet{imai2010general}. More details on the estimation of mediation effects are provided in Appendix D.3. 

Our proposed method identified both shared and distinct mediators across the three cognitive outcome measures. The hippocampal and amygdala volumes were consistently selected as important mediators for all three cognitive outcomes. Cortical thickness in the left angular gyrus was chosen as a significant mediator for both CDR-SB and ADAS-Cog. Furthermore, the cortical thickness in the middle frontal gyrus (left) was also selected as a positive mediator for MMSE.


As a sensitivity analysis, we applied our proposed method with Lasso linear regression in path-b (GKMS (Lasso)), HIMA, HIMA2, eHIMA, and MediationFDR methods for comparison. The mediation effects were estimated using structural equation modeling (SEM) through the \texttt{lavaan} R package \citep{rosseel_lavaan_2012}. Tables 4, 5, and 6 summarize the results for CDR-SB, ADAS-Cog, and MMSE scores, respectively. Both hippocampal and amygdala volumes are commonly selected in those methods. More details are provided in Appendix D.4. 
}

{\color{black} The results indicate that hippocampal atrophy mediates sex differences in Alzheimer's disease progression. Specifically, women exhibit faster hippocampal volume loss in structural brain changes, which contribute to more severe clinical symptoms and accelerated cognitive decline. Furthermore, the volume of the amygdala significantly mediates the relationship between sex and cognitive decline in Alzheimer's disease. Specifically, females exhibited smaller amygdala volumes than males, which were in turn associated with greater cognitive impairment.}


{\color{black}These findings are consistent with previous independent studies. For example, 
a longitudinal analysis found that women with probable Alzheimer's disease exhibited significantly faster hippocampal atrophy compared to men \citep{ardekani2016analysis}. This accelerated atrophy was associated with a more rapid decline into dementia, highlighting female sex as a risk factor for faster disease progression. Another study analyzing resting-state fMRI data from individuals with mild cognitive impairment (MCI) showed that females had weaker hippocampal connectivity to the precuneus cortex and brainstem compared to males. This diminished connectivity may contribute to the increased susceptibility of women to cognitive decline and progression to Alzheimer's disease \citep{williamson2022sex}.
} \cite{poulin2011amygdala} reported that amygdala atrophy is associated with increased dementia severity. Males consistently exhibit larger amygdala volumes than females, after adjusting for total brain volume \citep{goldstein2001normal}. {\color{black}Moreover, Tau-PET studies have shown that tau accumulation in the amygdala surpasses that in the entorhinal cortex, indicating the amygdala may be among the first brain regions affected by tau pathology in Alzheimer's disease \citep{stouffer2024amidst,lupton2016effect}.} 
Our analysis provides insights into how gender influences cognitive decline and contributes to a better understanding of gender-specific differences in dementia progression.

\section{Discussions} 
In this work, we present a knockoff-based mediator selection procedure (GKMS) with FDR control guarantees. We demonstrated its finite sample performance using extensive simulation settings and implemented it to identify mediators from structural MRI-derived measures for the relationship between gender and changes in cognitive performance in the MCI patients. Our proposed method successfully controls FDR in all the experimented scenarios, including linear, nonlinear settings, with correlated mediators and interactions among mediators.

{\color{black}In some extreme model mis-specified scenarios, the existing mediator selection methods we compared with (e.g., HIMA, HIMA2, eHIMA, MCP, and MediationFDR) exhibit near-zero power, failing to detect any mediators. This highlights a key limitation of their reliance on the model assumptions. In contrast, our proposed ``model-free" method, GKMS remain robust, consistently identifying meaningful mediators while maintaining FDR control.}

The proposed method is a modest extension from the Simultaneous Knockoff procedure, where we treat path-a and path-b as two experiments and we simultaneously control the FDR for both. With linear simulation settings, other existing mediator selection methods have similar performances as our proposed GKMS method. However, when interactions exist, or the true path-b model is nonlinear, some existing methods (eHIMA, MCP, HIMA2) fail to control FDR, MediationFDR and HIMA have suboptimal power performances. Since complex correlation structures tend to exist in brain imaging data and the mediation mechanisms are potentially complicated, the proposed GKMS method is perferred in our application example. 

For future work, the GKMS framework can be further extended for group-level, or multi-modal mediator selections by using the group Knockoff constructions \cite{dai2016knockoff}, and for mediator selection based on longitudinally measured mediators and outcomes (e.g. the ADNI data). For different real data scenarios, the different mediator selection methods demonstrate different performances; a data-driven approach to make decision on the selection of the mediator selection method to be adopted is also worth exploring.


\begin{table}[!ht]
\caption{ADNI study mediator selection results using \text{GKMS} method for the three outcomes: \textbf{CDR-SB}, \textbf{ADAS-Cog}, and \textbf{MMSE}.(L): Left; (R): Right.} 
\label{tab:GKMS} 
\centering 
\begin{threeparttable}
\begingroup
  \renewcommand{\arraystretch}{0.9}      
  \setlength{\tabcolsep}{5pt}            
  \fontsize{8}{9.5}\selectfont           
\begin{tabular}{l l l} 
\hline 
Cognitive Scores & Selected Mediator & Effect (95\% CI; p-value)\textsuperscript{*} \\ [0.5ex] 
\hline 
\textbf{CDR-SB}
 &  Volume of Hippocampus (L) &  {\color{black}$\alpha$ = -0.52 (-0.627, -0.414; $<$0.001)} \\ 
 & & {\color{black}$\beta$ = -0.157 (-0.388, -0.228; $<$0.001))} \\ 
 & & {\color{black}Mediation = 0.063 (0.013, 0.092; $<$0.001)} \\
 & &\\
 &  Volume of Amygdala (L) & {\color{black}$\alpha$ = -0.505 (-0.614, -0.397; $<$0.001)} \\ 
 & & {\color{black}$\beta$ = -0.213 (-0.458, -0.293;$<$0.001)} \\ 
 & & {\color{black} Mediation =0.092 (0.051, 0.137; $<$0.001)} \\
 &&\\
  & Volume of Hippocampus (R) &  {\color{black}$\alpha$ = -0.551 (-0.658, -0.444; $<$0.001)} \\ 
 & & {\color{black}$\beta$ = -0.162 (-0.393, -0.236; $<$0.001) }\\ 
 & & {\color{black}Mediation = 0.026 (0.004, 0.072; 0.02)} \\
 & & \\
& ThickAvg\textsuperscript{1} of Angular Gyrus (L) & {\color{black}$\alpha$ = 0.23 (0.115, 0.346; $<$0.001)} \\ 
 & & {\color{black}$\beta$ = -0.135 (-0.342, -0.201; $<$0.001)} \\ 
 & & {\color{black}Mediation = -0.024 (-0.045, -0.005; $<$0.001)} \\
 & &\\
 &  ThickAvg of Sulcus (L) &{\color{black}$\alpha$ = 0.192 (0.082, 0.302; $<$0.001)} \\ 
 & & {\color{black}$\beta$ = -0.165 (-0.405, -0.234; $<$0.001)} \\ 
 & & {\color{black}Mediation =-0.016 (-0.041, 0.003; 0.08) }\\
 \hline 
\textbf{ADAS-Cog}
 & Volume of Hippocampus (L) &  {\color{black}$\alpha$ = -0.52 (-0.627, -0.414; $<$0.001)} \\ 
 & & {\color{black}$\beta$ = -0.118 (-0.364, -0.186; $<$0.001)} \\ 
 & & {\color{black}Mediation = 0.041 (0.01, 0.073; 0.016)} \\
 & & \\
&  Volume of Amygdala (L) & {\color{black}$\alpha$ = -0.505 (-0.614, -0.397; $<$0.001)} \\ 
 & & {\color{black}$\beta$ = -0.171 (-0.411, -0.249; $<$0.001)} \\ 
 & & {\color{black}Mediation = 0.102 (0.039, 0.127; $<$0.001) }\\
 & &\\
&  ThickAvg of Angular Gyrus (L) & {\color{black}$\alpha$ = 0.233 (0.118, 0.349; $<$0.001)} \\ 
 & & {\color{black}$\beta$ = -0.164 (-0.418, -0.222 $<$0.001)} \\ 
 & & {\color{black}Mediation = -0.026 (-0.057, -0.005;  0.02)} \\
 & &\\
 &   ThickAvg of S.temporal.sup\textsuperscript{2} (R) & {\color{black}$\alpha$ = 0.092 (-0.023, 0.207; 0.117)} \\ 
 & & {\color{black}$\beta$ = -0.177 (-0.422, -0.235;  $<$0.001)} \\ 
 & & {\color{black}Mediation =-0.004 (-0.032, 0.008; 0.22) }\\
 & & \\
  &   Volume of white matter in ITL\textsuperscript{3} (L) & {\color{black}$\alpha$ = -0.895 (-1.004, -0.785; $<$0.001)} \\ 
 & & {\color{black}$\beta$ = 0.006 (-0.155, 0.185; 0.9) }\\ 
 & & {\color{black}Mediation =-0.026 (-0.057, 0.022; 0.34)} \\
\hline 
\textbf{MMSE}
 & Volume of Hippocampus (L) &  {\color{black}$\alpha$ = -0.52 (-0.627, -0.414; $<$0.001)} \\ 
 & & {\color{black}$\beta$ = 0.129 (0.184, 0.375; $<$0.001)} \\ 
 & & {\color{black}Mediation = -0.063 (-0.109, -0.016;  0.001)} \\
 & &\\
&  Volume of Amygdala (L) & {\color{black}$\alpha$ = -0.505 (-0.614, -0.397;  $<$0.001)} \\ 
 & & {\color{black}$\beta$ =0.165 (0.256, 0.409; $<$0.001) }\\ 
 & & {\color{black}Mediation = -0.077 (-0.142, -0.026; $<$0.001)} \\
 & & \\
&  Volume of Amygdala (R) & {\color{black}$\alpha$ = -0.596 (-0.706, -0.487; $<$0.001)} \\ 
 & & {\color{black}$\beta$ = 0.157 (0.224, 0.396; $<$0.001)} \\ 
 & & {\color{black}Mediation =-0.036 (-0.075, 0.001; 0.04) }\\
 & & \\
 &  Volume of Hippocampus (R) & {\color{black}$\alpha$ = -0.551 (-0.658, -0.444; $<$0.001)} \\ 
 & & {\color{black}$\beta$ = 0.132 (0.2, 0.366; $<$0.001)}\\
 & & {\color{black}Mediation = -0.076 (-0.122, -0.029; 0.032)} \\
 & & \\
&  ThickAvg of G-front-middle\textsuperscript{4} (L) & {\color{black}$\alpha$ = 0.29 (0.175, 0.404; $<$0.001)} \\ 
 & & {\color{black}$\beta$ = 0.136 (0.206, 0.392; $<$0.001)} \\ 
 & & {\color{black}Mediation = 0.028 (0.013, 0.056; $<$0.001)} \\
\hline 
\end{tabular}

\begin{tablenotes}
        \item {\color{black}\textsuperscript{*}$\beta$ effects are the importance scores with direction derived from partial dependence analysis comparing outcome predictions at the first and third quartiles of mediator values. Mediation effects are estimated using Monte Carlo integration as described by \citet{imai2010general}, and confidence intervals are obtained via bootstrap.}     
        \item Abbreviations (all extracted from the aparc.a2009s statistics file):
        \item \textsuperscript{1} ThickAvg: Average cortical thickness of the specified cortical region.
        \item \textsuperscript{2} S.temporal.sup: Superior temporal sulcus region.
        \item \textsuperscript{3} ITL: Inferior temporal lobe, denoting white matter volume in the left hemisphere of the inferior temporal lobe.
        \item  \textsuperscript{4} G-front-middle: Middle frontal gyrus, a cortical brain region.
     \end{tablenotes}
\endgroup
  \end{threeparttable}
\end{table}

\section*{Data Availability}
The ADNI data that support the findings in this paper can be accessed through the National Institute of Mental Health Data Archive (\url{https://adni.loni.usc.edu/data-samples/adni-data/}). The R code is available at \url{https://github.com/RunqiuWang22/Generalized_Knockoff_Mediator_Selection}.

\bibliography{example_paper}
\bibliographystyle{icml2022}

\pagebreak
\section*{Web Appendix A: Details for the choices of the functions in Mediator Selection Algorithm}
\subsection*{Appendix A.1: Model-X Knockoff assumptions and construction methods}

The general sampling method as described in \citet{candes2018} can be applied to all kinds of data types (continuous, categorical, count, or mixed with or without missing data) as long as the joint distribution of covariate $\X$ is known or can be estimated. However, for the implementation, their current R package only allows for continuous $\X$, so in simulation, we only consider the Model-X Knockoff construction methods for continuous $\X$. Specifically, two Model-X Knockoff construction methods are reviewed below:
\begin{itemize}
\item Gaussian: When the distribution of $\X$ is assumed to be Gaussian with the variance-covariance matrix $\bmSigma$, we can sample $\widetilde{\X}$ from $\widetilde{\X}|\X\sim N(\bmmu, \mathbf{V})$, where $\bmmu$ and $\mathbf{V}$ are given by,
\begin{eqnarray*}
\bmmu&=&\X-\X\bmSigma^{-1}\diag\{\bms\},\\
\mathbf{V}&=&2\diag\{\bms\}-\diag\{\bms\}\bmSigma^{-1}\diag\{\bms\}.
\end{eqnarray*}
We use the R function \textit{create.gaussian} within the R package \textit{knockoff} to implement this construction method.
\item Second order: The second-order Model-X knockoff construction method tries to sample $\widetilde{\X}$ such that 
\begin{equation*}
Cov([\X,\widetilde{\X}]) = \left(\begin{array}{cc}\bmSigma&\bmSigma-\diag\{\bms\}\\\bmSigma-\diag\{\bms\}&\bmSigma\end{array}\right).\end{equation*} 
where the requirements of $\bms$ are the same as for the Fixed-X knockoff and can be solved using the approximate semidefinite program (ASDP) algorithm as given in \citep{candes2018}. We use the R function \textit{create.second} within the R package \textit{knockoff} to implement this construction method.
\end{itemize}
The knockoffs $\Xp$ can also be generated using various advanced algorithms \citep{romano2019,liu2019,bates2020,spector2020}. 

\subsection*{Appendix A.2: Statistics compatible with each knockoff construction method}
The statistics described in this section are for individual hypothesis testing for simplicity of notation. It can be extended to statistics for the group hypothesis testing similar to those used in \citet{dai2016knockoff}.
\subsection*{Appendix A.2.1: Statistics compatible with the Fixed-X knockoff construction method}
\begin{itemize}
\item Order of selection (Lasso): We can choose the lasso variable selection procedure and construct the statistics as $\widehat{\bmbeta}(\lambda)$, where 
\[\widehat{\bmbeta}(\lambda) = \arg\min_{\bmb\in\R^{2p}} \left\{\norm{\Y - [ \X \ \Xp ] \bmb}_2^2 +\lambda \norm{\bmb}_1\right\}.\]
We can run over a range of $\lambda$ values decreasing from $+\infty$ (a fully sparse model) to $0$ (a fully dense model) and define $Z_j$ as the maximum $\lambda$ such that $\widehat{\beta}_j(\lambda)\neq 0$. If there is no $\lambda$ such that $\widehat{\bmbeta}_j(\lambda)\neq 0$, then we will simply define $Z_j$ as 0.
\item Absolute coefficient (Lasso): We can use $|\widehat{\beta}_j(\lambda)|$ as defined above with a specific $\lambda$ value. 
\end{itemize}

\subsection*{Appendix A.2.2: Statistics compatible with the Model-X knockoff construction method}
For the Model-X knockoffs, very general conditional models (such as generalized linear models or nonlinear models) can be used. In addition to the statistics listed in Appendix A.2.1, we can also use the following statistics.

\begin{itemize}
\item Absolute coefficient (glmnet): We can use $|\widehat{\beta}_j(\lambda)|$ from the penalized generalized linear regression or the penalized Cox regression model of $Y$ on $[\X\  \widetilde{\X}]$ with either a specific $\lambda$ value or a $\widehat{\lambda}$ estimated from cross-validation. 
\item Standardized coefficient (glmnet): We can also use the standardized regression coefficients $|\widehat{\beta}_j(\lambda)|/\widehat{\text{SE}}(\widehat{\beta}_j(\lambda))$ from the penalized generalized linear regression or the penalized Cox regression model of $Y$ on $[\X\   \widetilde{\X}]$ with either a specific $\lambda$ value or a $\widehat{\lambda}$ estimated from cross-validation.
\item Order of selection (glmnet): We can use the minimum $\lambda$ such that the regression coefficient becomes 0 from the penalized generalized linear regression or the penalized Cox regression model of $Y$ on $[\X\ \widetilde{\X}]$, or the reciprocal of the order of each variable to be included in the model when increasing the number of variables to be selected.
\item Variable importance factor: We can use the variable importance factors from the random forest fitting of $Y$ on $[\X\ \widetilde{\X}]$ with either fixed tuning parameters or tuning parameters selected from cross-validation. 
\end{itemize}       

\renewcommand{\thelemma}{B\arabic{lemma}}
\section*{Web Appendix B: Proofs}
\subsection*{B.1: Technical Lemmas} 
\newtheorem{mainlemma}{Lemma}
\begin{mainlemma}\label{lem:0}\label{Lamma:Lemb1}
Let $\W = f([\Z^a,\tilde{\Z}^a],[\Z^b,\tilde{\Z}^b])$ where $f$ is an OSFF. If $[Z^a_j, \tilde{Z}^a_j] \eqd [\tilde{Z}^a_j, Z^a_j]$ for all $j\in \mathcal{H}_a$, $[Z^b_j, \tilde{Z}^b_j] \eqd [\tilde{Z}^b_j, Z^b_j]$ for all $j\in \mathcal{H}_b$, and $[\Z^a,\tilde{\Z}^a]$ and $[\Z^b,\tilde{\Z}^b]$ are independent to each other, then for all $j \in \mathcal{H}$ and $\epsilon_j \sim \{\pm 1\}$ for all $j \in \mathcal{H}$. Then $W_j \eqd W_j \cdot \epsilon_j$. 
\end{mainlemma}
\begin{proof}
For any $j\in \mathcal{H}$, if $j\in \mathcal{H}_a$, then we have
$[Z^a_j, \tilde{Z}^a_j] \eqd [\tilde{Z}^a_j, Z^a_j]$. Since $f$ is OSFF, we have $-W_j=f([\Z^a,\tilde{\Z}^a]_{Swap\{j\}},[\Z^b,\tilde{\Z}^b])_j$. Since $[\Z^a,\tilde{\Z}^a]$ and $[\Z^b,\tilde{\Z}^b]$ are independent, and $[Z^a_j, \tilde{Z}^a_j] \eqd [\tilde{Z}^a_j, Z^a_j]$, we have $$f([\Z^a,\tilde{\Z}^a]_{Swap\{j\}},[\Z^b,\tilde{\Z}^b])_j\eqd f([\Z^a,\tilde{\Z}^a],[\Z^b,\tilde{\Z}^b])_j.$$ Together, we have $W_j\eqd -W_j$, and thus we have $W_j\eqd W_j\cdot\epsilon_j$. Otherwise, if $j\in \mathcal{H}_b$, then we have
$[Z^b_j, \tilde{Z}^b_j] \eqd [\tilde{Z}^b_j, Z^b_j]$. Given $f$ is OSFF, we have $-W_j=f([\Z^a,\tilde{\Z}^a],[\Z^b,\tilde{\Z}^b]_{Swap\{j\}})_j$. Since $[\Z^a,\tilde{\Z}^a]$ and $[\Z^b,\tilde{\Z}^b]$ are independent, and $[Z^b_j, \tilde{Z}^b_j] \eqd [\tilde{Z}^b_j, Z^b_j]$, we have $$f([\Z^a,\tilde{\Z}^a],[\Z^b,\tilde{\Z}^b]_{Swap\{j\}})_j\eqd f([\Z^a,\tilde{\Z}^a],[\Z^b,\tilde{\Z}^b])_j.$$ Together, we also have $W_j\eqd -W_j$, and thus we have $W_j\eqd W_j \cdot \epsilon_j$. So we finish the proof that for any $j\in \mathcal{H}$, we have $W_j\eqd W_j \cdot \epsilon_j$.
\end{proof}

\begin{mainlemma}\label{Lamma:Lemb2}
Assume $p_j$s are weakly correlated and exchangeable and $p_j\geq \text{Uniform}[0,1]$. For $k=m,m-1,\cdots,1,0$, put $V^{+}(k)=\#\{j:1\leq j\leq k, p_j\leq 1/2, j\in \mathcal{H}\}$ and $V^{-}(k)=\#\{ j:1\leq j\leq k, p_j> 1/2, j\in \mathcal{H}\}$ with the convention that $V^{\pm}(0)=0$. Let $\mathcal{F}_k$ be the filtration defined by knowing all the non-null $p$-values, as well as $V^{\pm}(k')$ for all $k'\geq k$. Then the process $M(k)=\frac{V^{+}(k)}{1+V^{-}(k)}$ is a super-martingale running backward in time with respect to $\mathcal{F}_k$. For any fixed $q$, $\widehat{k}=\widehat{k}_{+}$ or $\widehat{k}=\widehat{k}_{0}$ as defined in the proof of theorem \ref{pf:thm1} are stopping times, and as consequencis
\begin{eqnarray*}
\EE{\frac{\#\{j\leq \widehat{k}:p_j\leq 1/2, j\in \mathcal{H}\}}{1+\#\{j\leq \widehat{k}:p_j>1/2, j\in \mathcal{H}\}}}\leq 1
\end{eqnarray*}
\end{mainlemma}

\begin{proof}
The first part of the proof follows exactly the same as \cite{barber2015}. The filtration $\mathcal{F}_k$ contains the information of whether $k$ is null and non-null process is known exactly. If $k$ is non-null, then $M(k-1)=M(k)$ and if $k$ is null, we have
\begin{eqnarray*}
M(k-1)=\frac{V^{+}(k)-\mathbbm{1}_{p_k\leq 1/2}}{1+V^{-}(k)-(1-\mathbbm{1}_{p_k\leq 1/2})}=\frac{V^{+}(k)-\mathbbm{1}_{p_k\leq 1/2}}{\left(V^{-}(k)+\mathbbm{1}_{p_k\leq 1/2}\right)\vee 1}
\end{eqnarray*}
Given that nulls are exchangeable, we have
\begin{eqnarray*}
\PP{\mathbbm{1}_{p_k\leq 1/2}|\mathcal{F}_k}=\frac{V^{+}(k)}{\left(V^{+}(k)+V^{-}(k)\right)}.
\end{eqnarray*}
So when $k$ is null, we have

\begin{eqnarray*}
&&\EE{M(k-1)|\mathcal{F}_k}\\
&=&\frac{1}{V^{+}(k)+V^{-}(k)}\left[V^{+}(k)\frac{V^{+}(k)-1}{V^{-}(k)+1}+V^{-}(k)\frac{V^{+}(k)}{V^{-}(k)\vee 1}\right]\\
&=&\frac{V^{+}(k)}{1+V^{-}(k)}\mathbbm{1}_{V^{-}(k)>0}+(V^{+}(k)-1)\mathbbm{1}_{V^{-}(k)=0}\\
&\leq &M(k)
\end{eqnarray*}

This finishes the proof of super-martingale property. $\widehat{k}$ is a stopping time with respect to $\{\mathcal{F}_k\}$ since $\{\widehat{k}\geq k\}\in \mathcal{F}_k$. So we have $\EE{M(\widehat{k})}\leq \EE{M(m)}=\EE{\frac{\#\{j:p_j\leq 1/2, j\in \mathcal{H}\}}{1+\#\{j:p_j>1/2, j\in \mathcal{H}\}}}$.

Let $X=\#\{j:p_j\leq 1/2, j\in \mathcal{H}\}$, since we don't have independence between the null $p_j$s, we cannot directly obtain $X\leq Binomial (m,1/2)$ and get the exact expectation control under 1. We rely on the weak correlation for asymptotic control when the number of features goes to infinity. Speicifically, given that $p_j\geq \text{Uniform}[0,1]$ and the weak correlation between $p_j$s, based on weak law of large numbers, we have $X/N\rightarrow_p c$, where $c=N^{-1}\sum_{i=1}^N Pr(p_j\leq 1/2)\leq 1/2$. Since the function $f(X)=\frac{X}{1+N-X}$ is bounded, we have $\EE{\frac{X}{1+N-X}}\rightarrow \frac{c}{1-c}$ and since the function $g(x)=\frac{x}{1-x}$ is non-decreasing, we have $\EE{\frac{X}{1+N-X}}\leq 1$ when $N\rightarrow \infty$.
\end{proof}

\begin{mainlemma}\label{Lamma:Lemb3}
Assume $p_j$s satisfy the requirement in Lemma \ref{Lamma:Lemb2}, then with the Knockoff filter, the mFDR can be controlled at level $q$ and with the Knockoff+ filter, the FDR can be controlled at level $q$ when $p\rightarrow \infty$.
\end{mainlemma}
\begin{proof}
The proof follows \cite{barber2015}. Let $m=\#\{j:W_j\neq 0\}$ and assume without loss of generality that $|W_1|\geq |W_2|\geq\cdots\geq |W_m|>0$. 

We first show the result for the knockoff+ threshold. Define $V=\#\{j\leq \widehat{k}_{+}: p_j\leq 1/2, j\in \mathcal{H}\}$ and $R=\#\{j\leq \widehat{k}_{+}:p_j\leq 1/2\}$ where $\widehat{k}_{+}$ satisfy that $|W_{\widehat{k}_{+}}|=\tau_{+}$ where $\tau_{+}$ is defined in Theorem 1, we have

\begin{align*}
\mathbb{E} \left[ \frac{V}{R \vee 1} \right]
&= \mathbb{E} \left[ \frac{V}{R \vee 1} \mathbbm{1}_{\widehat{k}_{+} > 0} \right] \\
&= \mathbb{E} \left[
\frac{
\#\left\{ j \leq \widehat{k}_{+} : p_j \leq \frac{1}{2},\ j \in \mathcal{H} \right\}
}{
1 + \#\left\{ j \leq \widehat{k}_{+} : p_j > \frac{1}{2},\ j \in \mathcal{H} \right\}
}
\right. \\
&\quad \left. \times
\left(
\frac{
1 + \#\left\{ j \leq \widehat{k}_{+} : p_j > \frac{1}{2},\ j \in \mathcal{H} \right\}
}{
\#\left\{ j \leq \widehat{k}_{+} : p_j \leq \frac{1}{2} \right\} \vee 1
}
\right)
\mathbbm{1}_{\widehat{k}_{+} > 0}
\right] \\
&\leq \mathbb{E} \left[
\frac{
\#\left\{ j \leq \widehat{k}_{+} : p_j \leq \frac{1}{2},\ j \in \mathcal{H} \right\}
}{
1 + \#\left\{ j \leq \widehat{k}_{+} : p_j > \frac{1}{2},\ j \in \mathcal{H} \right\}
}
\right] q \leq q.
\end{align*}

where the first inequality holds by the definition of $\widehat{k}_{+}$ and the second inequality holds by Lemma \ref{Lamma:Lemb2}.

Similarly, for the knockoff threshold, we have $V=\#\{j\leq \widehat{k}_0: p_j\leq 1/2, j\in \mathcal{H}\}$ and $R=\#\{j\leq \widehat{k}_0:p_j\leq 1/2\}$ where $\widehat{k}_0$ satisfies that $|W_{\widehat{k}_0}|=\tau$ where $\tau$ is defined as in theorem 1, then

\begin{align*}
\mathbb{E} \left[ \frac{V}{R + q^{-1}} \right]
&= \mathbb{E} \left[
\frac{
\#\left\{ j \leq \widehat{k}_0 : p_j \leq \frac{1}{2},\ j \in \mathcal{H} \right\}
}{
1 + \#\left\{ j \leq \widehat{k}_0 : p_j > \frac{1}{2},\ j \in \mathcal{H} \right\}
}
\right. \\
&\quad \left. \times 
\left(
\frac{
1 + \#\left\{ j \leq \widehat{k}_0 : p_j > \frac{1}{2},\ j \in \mathcal{H} \right\}
}{
\#\left\{ j \leq \widehat{k}_0 : p_j \leq \frac{1}{2} \right\} + q^{-1}
}
\right)
\mathbbm{1}_{\widehat{k}_0 > 0}
\right] \\
&\leq \mathbb{E} \left[
\frac{
\#\left\{ j \leq \widehat{k}_0 : p_j \leq \frac{1}{2},\ j \in \mathcal{H} \right\}
}{
1 + \#\left\{ j \leq \widehat{k}_0 : p_j > \frac{1}{2},\ j \in \mathcal{H} \right\}
}
\right] q \leq q.
\end{align*}

where the first inequality holds by the definition of $\widehat{k}_0$ and the second inequality holds by Lemma \ref{Lamma:Lemb2}.
\end{proof}

\subsection*{B.2: Proof of Theorem 1}\label{pf:thm1}
\begin{proof}
To prove Theorem 1, we first show that by the construction of the Model-X knockoff and the choice of compatible test statistics, we have $[Z^a_j, \tilde{Z}^a_j] \eqd [\tilde{Z}^a_j, Z^a_j]$ for all $j\in \mathcal{H}_a$ and $[Z^b_j, \tilde{Z}^b_j] \eqd [\tilde{Z}^b_j, Z^b_j]$ for all $j\in \mathcal{H}_b$. Since we can include $\V$ in the knockoff construction to make $[M_j, \tilde{M}_j,\V]\eqd [\tilde{M}_j, M_j,\V]$ for any $j\in \mathcal{H}_a$, we have the statistics from the marginal regression of $X$ on $[M_j, \tilde{M}_j, \V]$ have the form of $$[Z^a_j, \tilde{Z}^a_j]=g([M_j, \tilde{M}_j, \V]^\top [M_j, \tilde{M}_j, \V], [M_j, \tilde{M}_j, \V]^\top X).$$ Since $[M_j, \tilde{M}_j, \V]\eqd [\tilde{M}_j, M_j, \V]$. For $j\in \mathcal{H}_a$, we have $[M_j, \tilde{M}_j]X=0$, so we have $Z^a_j\eqd \tilde{Z}^a_j$.

Also, since we have $[\M, \tilde{\M}, \V]\eqd [\M, \tilde{\M},\V]_{Swap(S)}$ for any $S\subset \mathcal{H}_b$, the Model-X knockoff results from \cite{candes2018} and \cite{dai2024_mediation} can be applied to obtain the exchangeability for $[Z^b_j, \tilde{Z}^b_j]$. When linear model is assumed, we can make a projection of $Y$, $X$, and $M_j$ on the orthogonal space of $\V$ for path-b. So the knockoff generated using residuals based on fixed knockoff construction and test statistics that compatible with fixed knockoff construction satisfy the exchangeability requirement for path-b by \cite{barber2015}.

Since we use data splitting, we have $[\Z^a,\tilde{\Z}^a]$ and $[\Z^b,\tilde{\Z}^b]$ are independent. Define p-values $p_j=1$ if $W_j<0$ and $p_j=1/2$ if $W_j>0$, then Lemma \ref{Lamma:Lemb1} guarantee that we have $p_j\geq \text{Uniform}[0,1]$ and are independent from nonnulls. Since we assumed that the path-a statistics satisfy the weak dependence assumption, we have $p_j$s are weakly correlated. Applying Lemma \ref{Lamma:Lemb3}, we obtain the mFDR and FDR control results. 
\end{proof}

\section*{Web Appendix C: Additional Simulations}
\subsection*{C.1: Details of Data Generation and Simulation Settings}
\paragraph{\textbf{Linear Models Settings}}\label{sec:linear}

{\color{black} We first performed simulations under linear model settings and sampled the data with sample size $n=1000$. We designate $M_{i,10}, M_{i,11}, M_{i,12}, \cdots, M_{i,30}$ as the true mediators, meaning they have both non-zero $\alpha$ and $\beta$ coefficients. To achieve this, in path-a, we assign a non-zero value $a$ to the first 9 entries of $\bm{\alpha}$, $0.7a$ to the entries 10:30 of $\bm{\alpha}$. In path-b, we assign a non-zero value $b$ to the 31st through 40th coefficients of $\bm{\beta}$, $0.7b$ to the 10th through 30th coefficients of $\bm{\beta}$, and we set all other coefficient values to 0. 
Our parameters are $\bm{\alpha}, \bm{\beta}$ defined as follows:\\
$
\bm{\alpha} = (\, \underbrace{a, \ldots, a}_{9},\ \underbrace{0.7a, \ldots, 0.7a}_{21},\ \underbrace{0, \ldots, 0}_{p - 30} \,)^\top
$,\\
$
\bm{\beta} = (\, \underbrace{0, \ldots, 0}_{9},\ \underbrace{0.7b, \ldots, 0.7b}_{21},\ \underbrace{b, \ldots, b}_{10},\ \underbrace{0, \ldots, 0}_{p - 40} \,)^\top$.

In the simulation setting, we vary the correlation structure among mediators by setting the correlation coefficient $\rho \in \{0,0.1,\dots,0.9\}$. To examine the impact of signal strength in path a and path b, we vary $a \in \{0.1,0.2,0.3,0.4,0.5\}$ and $b \in \{0.1,0.2,0.3,0.4,0.5\}$.
The number of potential mediators is set to vary across $p=100,200,\dots,500$ to evaluate performance in different dimensional settings. We fix $\gamma=1$,$\bm{\eta}_{1}=(1,\cdots,1)^\top$, $\eta_{2}=1$ and $\sigma=0.4$.

\paragraph{\textbf{Non-linear Models Settings}} \label{sec:nonlinear}
We also performed simulations under non-linear model settings. We consider two settings: (1) Two-way interactions exist in path-b; (2) The outcome $\bm{Y}$ is modeled as the element-wise cosine transformation of the mediator matrix $\M$ in path-b. We sampled the data with sample size $n=1000$.

\paragraph{\textbf{Setting 1}: Two-way interactions} 


We also utilize the same true mediators defined in the linear setting, with $M_{i,10}, \cdots, M_{i,30}$ as the true mediators. In path-a, we assign a non-zero value $0.5a$ to the first 9 entries of $\bm{\alpha}$, $a$ to the 10:30th entries of $\bm{\alpha}$. 
In path-b, we assign a non-zero value $\boldsymbol{\delta}= b \times \bf{1}_{15} $
to ensure that interactions among the 10th through the 39th mediators represent true signals for path-b. Specifically, we define $\mathbf{Z}$, the matrix of interaction terms, as
\[
\mathbf{Z} = \begin{bmatrix}
M_{1,10}M_{1,11} & M_{1,12}M_{1,13} & \cdots & M_{1,38}M_{1,39} \\
M_{2,10}M_{2,11} & M_{2,12}M_{2,13} & \cdots & M_{2,38}M_{2,39} \\
\vdots & \vdots & \ddots & \vdots \\
M_{n,10}M_{n,11} & M_{n,12}M_{n,13} & \cdots & M_{n,38}M_{n,39}
\end{bmatrix},
\]
where each row corresponds to interaction terms for a given subject i, and we assign non-zero coefficients $\mathbf{\delta}$ to these interaction terms. We define $
\bm{\alpha} = (\underbrace{0.5a, \ldots, 0.5a}_{9},\ \underbrace{a, \ldots, a}_{21},\ \underbrace{0, \ldots, 0}_{p - 30})
$,
$
\boldsymbol{\delta} = (\underbrace{0, \ldots, 0}_{4},\ \underbrace{b, \ldots, b}_{15},\ \underbrace{0, \ldots, 0}_{p/2 - 19})
$

In the simulation setting, we vary the correlation structure among mediators by setting the correlation coefficient $\rho \in \{0.1, 0.3, 0.5\}$. To examine the impact of signal strength in path a and path b, we vary $a \in \{0.4, 0.6\}$ and $b \in \{0.5, 1\}$.
The number of potential mediators is set to be $p=100$. We fix $\gamma=1$,$\bm{\eta}_{1}=(0,\cdots,0)^\top$, and $\sigma=0.4$.


\paragraph{\textbf{Setting 2}: Cosine transformation}
We extend our model to a nonlinear model by applying the cosine function to each element of $\M$. We also set the sample size to $n=1000$ with a higher dimension of mediators $p=400$. 

We still set {$M_{i,10}, \cdots, M_{i,30}$} as the true mediators. 
In path-a, we assign a non-zero value $a$ to the first 9 coefficients of $\alpha$, $0.7a$ to the coefficient 10:30 of $\bm{\alpha}$. In path-b, we assign a non-zero value of $b$ to the 10th through 30th coefficients of $\bm{\beta}$, $0.7b$ to the 31st through 39th coefficients of $\bm{\beta}$, and we set all other coefficient values to 0. 

Our parameters are defined as follows:\\
$
\bm{\alpha} = (\underbrace{a, \ldots, a}_{9},\ \underbrace{0.7a, \ldots, 0.7a}_{21},\ \underbrace{0, \ldots, 0}_{p - 30}).
$,\\
$
\bm{\beta} = (\underbrace{0, \ldots, 0}_{9},\ \underbrace{b, \ldots, b}_{21},\ \underbrace{0.7b, \ldots, 0.7b}_{9},\ \underbrace{0, \ldots, 0}_{p - 39}).
$
In the simulation setting, we vary the correlation structure among mediators by setting the correlation coefficient $\rho \in \{0.1, 0.3\}$. To examine the impact of signal strength in path a and path b, we vary $a \in \{0.2, 0.5\}$ and $b \in \{0.5, 1\}$.
The number of potential mediators is set to be $p=400$. We fix $\gamma=1$,$\bm{\eta}_{1}=(0,\cdots,0)^\top$, and $\sigma=0.4$.

\subsection*{C.2: Details of Method specification}
For our proposed method, for path-a, we use the Guassian Model-X method to generate knockoffs $(\tilde{M}_j, \tilde{V})$ for $(M_j, V)$, and fit a regression model of the form $X \sim (M_j, \tilde{M}_j, V)$ to obtain test statistics $(\mathbf{Z}^a, \tilde{\mathbf{Z}}^a)$ using the absolute value of the coefficients (more details are in Apendix A). For path-b linear settings, we use the PLS option in Remark 1 to derive the test statistics $(\mathbf{Z}^b, \tilde{\mathbf{Z}}^b)$. For the nonlinear setting, we fit random forests models and extract variable importance scores to construct $(\mathbf{Z}^b, \tilde{\mathbf{Z}}^b)$. For HIMA methods, (i.e.,HIMA, HIMA2 and eHIMA), we use the HIMA package version 2.2.2. with their default methods. We set the controlled FDR level $q=0.2$ for our proposed method and all HIMA methods. For method MediationFDR, we apply the MediationFDR package version 1.0.0 with the default methods and to control the overall FDR at $0.2$, we set $q_1=0.1$ in step 1 and $q_2=0.1$ in step 4. For the method MCP, we use the package MultiMed version 2.24.0 with the default methods and set the thresholds as 0.1 for both path-a and path-b.  

}
\subsection*{C.3: Additional Simulation Results}
We extend our non-linear settings to evaluate the performance. For \textbf{Setting 1}, we explore the different non-zero path-a coefficient values: $a = 0.4,0.6,0.8,1.0$; different non-zero path-b values,$b = 0.5,1,1.5,2$; and different correlation coefficients of $\bm{M}$: $\rho = 0,0.1,\dots,0.9$. For \textbf{Setting 2}, we explore the different non-zero path-$a$ coefficient values: $a = 0.1,0.2,0.3,0.4,0.5$; different non-zero path-$b$ coefficient value: $b = 0.5,0.7,\dots,1.5$; different correlation coefficients of $\M$: $\rho = 0,0.1,\dots,0.7$.

Figure \ref{fig:chg_nonlinear} summarizes the empirical results for non-linear models with interactions (\textbf{Setting 1}) and models where mediators undergo a cosine transformation. (\textbf{Setting 2}).  In Setting 1, we observe that GKMS consistently achieves the highest power across all examined conditions, particularly under low to moderate correlation levels (\(\rho \leq 0.5\)). As the correlation among mediators increases, the power of all methods declines; however, GKMS demonstrates a more gradual decrease and remains superior throughout. When varying the signal strengths \(a\) (path-a) and \(b\) (path-b), GKMS continues to dominate in terms of power, while maintaining FDR well below or near the nominal level of 0.1. In contrast, eHIMA often fails to control FDR, particularly under weak signals or high correlation, and shows substantially reduced power.

Setting 2 presents a more challenging scenario with higher dimensionality and weaker path-a signal (\(a=0.2\)). Even in this difficult setting, GKMS remains the most powerful method, significantly outperforming existing approaches. Its power is robust to increases in correlation and varying signal strength in both paths. FDR remains controlled across all conditions. Competing methods, particularly eHIMA, display inflated FDR and poor power in most cases, while MediationFDR and HIMA variants demonstrate more conservative behavior with limited signal detection capabilities.

Across both settings, GKMS achieves strong FDR control and substantial improvements in power. This confirms its effectiveness in nonlinear and high-dimensional environments. The method is robust across a wide range of conditions, including increasing correlation and diminishing effect sizes, making it a promising tool for complex mediation analysis involving nonlinear or interaction effects.



\section*{Web Appendix D: Additional ADNI data information}
\subsection*{Appendix D.1: Structural MRI-Derived Measures}

The 555 candidate structural MRI-derived measures include features from cortical surface regions (aparc.a2009s.stats) in the left (lh) and right (rh) hemispheres, white matter (wmparc.stats), and subcortical structures (aseg.stats). For cortical surface measurements, we focus on three specific metrics: Thickness Average (ThickAvg), Mean Curvature (MeanCurv), and Gaussian Curvature (GausCurv). For white matter and subcortical structures, we focus on volumetric measures. These measures are further categorized by different anatomical regions within the cortical surface, white matter, and subcortical structures. The resulting combined names for the 555 MRI-derived measures incorporate the structure type, measurement type, and specific region.
The combined names are listed in the github: \url{https://github.com/RunqiuWang22/Generalized_Knockoff_Mediator_Selection}.

\subsection*{Appendix D.2: Cognitive Outcome Measures}

The multiple scores evaluate different aspects of cognitive function, including the Clinical Dementia Rating Sum of Boxes (CDR-SB), Alzheimer’s Disease Assessment Scale-Cognitive Subscale (ADAS-Cog) and Mini-Mental State Examination (MMSE). These scores are obtained through direct assessments performed by clinicians or neuropsychologists, providing comprehensive evaluations of brain functions such as memory, orientation, problem-solving, and community engagement. CDR-SB evaluates six domains of cognitive and functional performance, with higher scores indicating greater impairment \citep{lynch2005clinical}. ADAS-Cog, includes versions such as ADAS-11, ADAS-13, and ADAS-Q4, and assesses memory, language, and praxis, with the higher scores indicating more severe cognitive dysfunction \citep{podhorna2016alzheimer}. Additionally, the MMSE is a global screening tool measuring orientation, attention, memory, and language, with lower scores reflecting greater cognitive impairment \citep{pike2017sex}.

\subsection*{Appendix D.3: Estimation of Mediation Effects}
{\color{black}For coefficient estimation, we employed structural equation modeling (SEM) through the \texttt{lavaan} R package \citep{rosseel_lavaan_2012} for methods assuming linear relationships (all except GKMS(RF)). For each method, we use the identified mediators along with the exposure variable ($X$), outcome ($Y$), and covariate (age), and construct a path model where in each mediator $M_k$ is regressed on $X$ and $V$, while the outcome $Y$ was regressed on $X$, $V$, and all selected mediators.

This SEM framework facilitated estimation of:
\begin{itemize}
    \item The \textbf{path-a coefficients} ($\alpha_k$): representing effects of $X$ on each mediator $M_k$,
    \item The \textbf{path-b coefficients} ($\beta_k$): capturing effects of each mediator $M_k$ on $Y$ (adjusting for $X$ and $V$),
    \item The \textbf{individual mediation effects} ($\text{indirect}_k = \alpha_k \times \beta_k$),
\end{itemize}

All coefficients were estimated with corresponding standard errors, 95\% confidence intervals, and $p$-values derived via asymptotic normal theory.

For GKMS(RF), we implemented a hybrid approach to causal mediation analysis accommodating linear path-a and nonlinear path-\textit{b} relationships. Path-a coefficients are estimated the same as above. Path-b coefficients ($\beta$) were quantified using Random Forest variable importance measures with permutation-based significance testing, allowing for detection of complex nonlinear relationships without requiring pre-specified functional forms.

Effect directionality was determined through partial dependence analysis comparing outcome predictions at the first and third quartiles of mediator values, with bootstrap confidence intervals (100 resamples). For mediator-specific natural indirect effects ($\widehat{\text{NIE}}_k$), we use g-formula to compute it. To evaluate the integration involved in the g-formula, we implemented the three-step Monte Carlo integration approach described by \citet{imai2010general}, which appropriately addresses the counterfactual framework through:
\begin{enumerate}
    \item Sampling mediator values under alternative treatment conditions;
    \item Predicting counterfactual outcomes across all treatment-mediator value combinations;
    \item Computing each mediator-specific effect $\widehat{\text{NIE}}_k$ as 
    \[
    \mathbb{E}[Y(0, M_k(1), M_{-k}(0))] - \mathbb{E}[Y(0, M_k(0), M_{-k}(0))].
    \]
\end{enumerate}
Bootstrap resampling (100 replicates) provided confidence intervals and p-values for these mediation effects. 
}

{\color{black}
\subsection*{Appendix D.4: Additional Simulation Results}
As a sensitivity analysis, we also applied our proposed method with Lasso linear regression in path-b (GKMS (Lasso)), HIMA, HIMA2, eHIMA, and MediationFDR methods for comparison. Results summarize in Tables \ref{tab:CDRSB}, \ref{tab:ADAS}, and \ref{tab:MMSE} for CDR-SB, ADAS-Cog, and MMSE scores, respectively. The mediation effects were estimated using structural equation modeling (SEM) through the \texttt{lavaan} R package \citep{rosseel_lavaan_2012}. 

The results of using Lasso method in path-b showed partial overlap with those selected by the random forest. Lasso consistently selected hippocampal and amygdala volumes as mediators across all three cognitive outcomes. For the CDR-SB, Lasso identified left hippocampal volume, bilateral amygdala volumes, right inferior lateral ventricle volume, and brainstem volume as significant mediators.
For the ADAS-Cog, selected mediators by Lasso included left hippocampal and amygdala volumes, left angular gyrus cortical thickness, right superior temporal sulcus cortical thickness, and left choroid plexus volume. Finally, for the MMSE, mediators identified by Lasso comprised left hippocampal volume, bilateral amygdala volumes, right inferior lateral ventricle volume, and left medial orbitofrontal cortex volume.

Comparatively, HIMA, eHIMA, and MediationFDR selected fewer mediators but consistently included the amygdala volumes. Moreover, HIMA2 did not select any mediators for CDR-SB or MMSE but identified two measures for ADAS-Cog. In general, almost all the methods identified core mediators, including hippocampal and amygdala structures, variations occur in additional cortical and subcortical regions, reflecting methodological differences in mediator selection.
}
\begin{table}[!p]
\caption{ADNI study mediator selection results for the \textbf{Clinical Dementia Rating Sum of Boxes (CDR-SB)}. (L): Left; (R): Right.} 
\label{tab:CDRSB} 
\centering 
\begin{threeparttable}
\begingroup
  \renewcommand{\arraystretch}{0.9}      
  \setlength{\tabcolsep}{5pt}            
  \fontsize{8}{9.5}\selectfont           
\begin{tabular}{l l l} 
\hline 
Method & Selected Mediator & Effect (95\% CI; p-value)\textsuperscript{†} \\ [0.5ex] 
\hline 
GKMS (Lasso)
 & Volume of Hippocampus (L) & $\alpha$ = -0.52 (-0.627, -0.414; $<$0.001) \\ 
 & & $\beta$ = -0.132 (-0.193; -0.071;$<$0.001) \\ 
 & & Mediation = 0.069 (0.034, 0.104; $<$0.001) \\
 & & \\
&  Volume of Amygdala (L) & $\alpha$ = -0.505 (-0.614, -0.397; $<$0.001) \\ 
 & & $\beta$ = -0.178 (-0.238; -0.118;$<$0.001) \\ 
 & & Mediation = 0.09 (0.054, 0.126; $<$0.001) \\
 & & \\
&  Volume of Amygdala (R) & $\alpha$ = -0.596 (-0.706, -0.487; $<$0.001) \\ 
 & & $\beta$ = -0.111 (-0.171; -0.052;$<$0.001) \\ 
 & & Mediation = 0.066 (0.029, 0.104; 0.001) \\
 & & \\
&  Volume of Inf.Lat.Vent\textsuperscript{1} (R) & $\alpha$ = -0.483(-0.591, -0.376; $<$0.001) \\ 
 & & $\beta$ = 0.15 (0.089; 0.211;$<$0.001) \\ 
 & & Mediation = -0.072 (-0.106, -0.039; $<$0.001) \\
 & & \\
 &   Volume of Brain Stem & $\alpha$ = -0.955(-1.061, -0.848; $<$0.001) \\ 
 & & $\beta$ = -0.121 (0.06; 0.183;$<$0.001) \\ 
 & & Mediation = -0.116 (-0.176, -0.056; $<$0.001) \\
 & & \\
MediationFDR
&  Volume of Amygdala (L) & $\alpha$ = -0.505 (-0.614, -0.397; $<$0.001) \\ 
 & & $\beta$ = -0.376 (-0.438; -0.314;$<$0.001) \\ 
 & & Mediation = 0.19 (0.139, 0.241; $<$0.001) \\
& &\\
HIMA
&  Volume of Amygdala (L) & $\alpha$ = -0.505 (-0.614, -0.397; $<$0.001) \\ 
 & & $\beta$ = -0.282 (-0.343, -0.222; $<$0.001) \\ 
 & & Mediation = 0.143 (0.099, 0.186; $<$0.001) \\
 & & \\
&  Volume of Inf.Lat.Vent (R) & $\alpha$ = -0.483(-0.591, -0.376; $<$0.001) \\ 
 & & $\beta$ = 0.135 (0.074, 0.196; $<$0.001)\\
 & & Mediation = -0.065 (-0.098, -0.032; $<$0.001) \\
 & & \\
& ThickAvg\textsuperscript{2} of Parahip\textsuperscript{3} (R) & $\alpha$ = 0.123 (0.007, 0.24; 0.037) \\ 
& & $\beta$ = -0.128 (-0.185, -0.072; $<$0.001)\\
& & Mediation = -0.016 (-0.032, 0.001; 0.059) \\
& & \\
eHIMA
&  Volume of Amygdala (L) & $\alpha$ = -0.505 (-0.614, -0.397; $<$0.001) \\ 
 & & $\beta$ = -0.376 (-0.438; -0.314;$<$0.001) \\ 
 & & Mediation = 0.19 (0.139, 0.241; $<$0.001) \\
\hline 
\end{tabular}

\begin{tablenotes}
         \item {\color{black}\textsuperscript{†} Coefficient estimation is performed using structural equation modeling (SEM) via the \texttt{lavaan} R package \citep{rosseel_lavaan_2012}.}  
         \item Abbreviations (all extracted from the aparc.a2009s statistics file):
        \item \textsuperscript{1} Inf.Lat.Vent: Inferior Lateral Ventricle.
        \item \textsuperscript{2} ThickAvg: Average cortical thickness of the specified cortical region.
        \item \textsuperscript{3} Parahip: Parahippocampal gyrus, a medial temporal lobe region.
     \end{tablenotes}
\endgroup
  \end{threeparttable}
\end{table}

\begin{table}[ht]
\small
\caption{ADNI study mediator selection results for the\textbf{Alzheimer’s Disease Assessment Scale-Cognitive Subscale (ADAS-Cog)}. (L): Left; (R): Right.} 
\label{tab:ADAS} 
\centering 
\begin{threeparttable}
\begingroup
  \renewcommand{\arraystretch}{0.9}      
  \setlength{\tabcolsep}{5pt}            
  \fontsize{8}{9.5}\selectfont           
\begin{tabular}{l l l} 
\hline 
Method & Selected Mediator & Effect (95\% CI; p-value)\textsuperscript{†}  \\ [0.5ex] 
\hline 
GKMS (Lasso)
 & Volume of Hippocampus (L) &  $\alpha$ = -0.52 (-0.627, -0.414; $<$0.001) \\ 
 & & $\beta$ = -0.083 (-0.145, -0.02; 0.009) \\ 
 & & Mediation = 0.043 (0.009, 0.077; 0.012) \\
 & &\\
&  Volume of Amygdala (L) & $\alpha$ = -0.505 (-0.614, -0.397; $<$0.001) \\ 
 & & $\beta$ = -0.139 (-0.201, -0.078; $<$0.001) \\ 
 & & Mediation = 0.07 (0.036, 0.105; $<$0.001) \\
 & &\\
&  ThickAvg\textsuperscript{1} of Angular Gyrus(L) & $\alpha$ = 0.233 (0.118, 0.349; $<$0.001) \\ 
 & & $\beta$ = -0.099 (-0.157, -0.041; 0.001) \\ 
 & & Mediation = -0.023 (-0.041, -0.005; 0.011) \\
 & & \\
 &   ThickAvg of S.temporal.sup\textsuperscript{2}(R)  & $\alpha$ = 0.092 (-0.023, 0.207; 0.117) \\ 
 & & $\beta$ = -0.167 (-0.226, -0.109; $<$0.001) \\ 
 & & Mediation = -0.015 (-0.035, 0.005; 0.131) \\
 & &\\
 &   Volume of Choroid Plexus (L) & $\alpha$ = -0.719 (-0.826, -0.611; $<$0.001) \\ 
 & & $\beta$ = 0.081 (0.019, 0.143; 0.011) \\ 
 & & Mediation = -0.058 (-0.104, -0.013; 0.012) \\
 & &\\
HIMA
&  Volume of Amygdala (L) & $\alpha$ = -0.505 (-0.614, -0.397; $<$0.001) \\ 
 & & $\beta$ = -0.324 (-0.387, -0.261; $<$0.001) \\ 
 & & Mediation = 0.164 (0.116, 0.211; $<$0.001) \\
& &\\
eHIMA
&   Volume of Amygdala (L) & $\alpha$ = -0.505 (-0.614, -0.397; $<$0.001) \\ 
& & $\beta$ = -0.324 (-0.387, -0.261; $<$0.001) \\ 
& & Mediation = 0.164 (0.116, 0.211; $<$0.001) \\
& &\\
HIMA2
& ThickAvg of Cuneus Gyrus(L) & $\alpha$ = -0.202 (-0.322, -0.081; 0.001) \\ 
 & & $\beta$ = -0.063 (-0.123, -0.004; 0.036) \\ 
 & & Mediation = 0.013 (-0.001, 0.027; 0.078) \\
 & &\\
&  ThickAvg of Pole temporal(R) & $\alpha$ = 0.259 (0.143, 0.376; $<$0.001) \\ 
 & & $\beta$ = -0.084 (-0.146, -0.023; 0.007) \\ 
 & & Mediation = -0.022 (-0.041, -0.003; 0.022) \\ 
\hline 
\end{tabular}
\begin{tablenotes}
       \item [] We use version ADAS-13 representing the results of ADAS-Cog.
        \item {\color{black}\textsuperscript{†} Coefficient estimation is performed using structural equation modeling (SEM) via the \texttt{lavaan} R package \citep{rosseel_lavaan_2012}.}     
        \item Abbreviations (all extracted from the aparc.a2009s statistics file):
        \item \textsuperscript{1} ThickAvg: Average cortical thickness of the specified cortical region.
        \item \textsuperscript{2} S.temporal.sup: Superior temporal sulcus.
     \end{tablenotes}
\endgroup
  \end{threeparttable}
\end{table}

\begin{table}[!p]
\small
\caption{ADNI study mediator selection results for the \textbf{Mini-Mental State Examination (MMSE)}. (L): Left; (R): Right.} 
\label{tab:MMSE} 
\centering 
\begin{threeparttable}
\begingroup
  \renewcommand{\arraystretch}{0.9}      
  \setlength{\tabcolsep}{5pt}            
  \fontsize{8}{9.5}\selectfont    
\begin{tabular}{l l l} 
\hline 
Method & Selected Mediator & Effect (95\% CI; p-value)\textsuperscript{†}  \\ [0.5ex] 
\hline 
GKMS (Lasso)
 & Volume of Hippocampus (L) & $\alpha$ = -0.52 (-0.627, -0.414; $<$0.001) \\ 
 & & $\beta$ = 0.111 (0.048, 0.174; 0.001) \\ 
 & & Mediation = -0.058 (-0.093, -0.023; 0.001) \\
 & & \\
&  Volume of Amygdala (L) & $\alpha$ = -0.505 (-0.614, -0.397; $<$0.001) \\ 
 & & $\beta$ = 0.184 (0.122, 0.246; $<$0.001) \\ 
 & & Mediation = -0.093 (-0.13, -0.056; $<$0.001) \\
 & & \\
&  Volume of Amygdala (R) & $\alpha$ = -0.596 (-0.706, -0.487; $<$0.001) \\ 
 & & $\beta$ = 0.068 (0.007, 0.129; 0.028) \\ 
 & & Mediation = -0.041 (-0.078, -0.004; 0.032) \\
 & &\\
&  Volume of Inf.Lat.Vent\textsuperscript{1} (R) & $\alpha$ = -0.483(-0.591, -0.376; $<$0.001) \\ 
 & & $\beta$ = -0.088 (-0.15, -0.026; 0.006) \\ 
 & & Mediation = 0.042 (0.011, 0.074; 0.008) \\
 & & \\
 &   Volume of Medial Orbitofrontal (L) & $\alpha$ = -0.702 (-0.816, -0.588; $<$0.001) \\ 
 & & $\beta$ = -0.147 (-0.206, -0.088; $<$0.001) \\ 
 & & Mediation = 0.103 (0.059, 0.148; $<$0.001) \\
 & & \\
MediationFDR
&  Volume of Amygdala (L) & $\alpha$ = -0.505 (-0.614, -0.397; $<$0.001) \\ 
 & & $\beta$ = 0.336 (0.273, 0.399; $<$0.001)\\ 
 & & Mediation = -0.17 (-0.218, -0.121; $<$0.001) \\
& &\\
HIMA
&   Volume of Amygdala (L) & $\alpha$ = -0.505 (-0.614, -0.397; $<$0.001) \\ 
 & & $\beta$ = 0.219 (0.158, 0.281; $<$0.001) \\ 
 & & Mediation = -0.111 (-0.15, -0.072; $<$0.001) \\
 & &\\
&  ThickAvg\textsuperscript{2} of S.temporal.sup\textsuperscript{3} (L) & $\alpha$ = 0.12 (0.007, 0.233; 0.037) \\ 
 & & $\beta$ = 0.246 (0.187, 0.305; 0)\\
 & & Mediation = 0.03 (0.001, 0.058; 0.043) \\
& &\\
eHIMA
&   Volume of Amygdala (L) & $\alpha$ = -0.505 (-0.614, -0.397; $<$0.001) \\ 
 & & $\beta$ = 0.336 (0.273, 0.399; $<$0.001)\\ 
 & & Mediation = -0.17 (-0.218, -0.121; $<$0.001) \\
\hline 
\end{tabular}
\begin{tablenotes}
        \item {\color{black}\textsuperscript{†} Coefficient estimation is performed using structural equation modeling (SEM) via the \texttt{lavaan} R package \citep{rosseel_lavaan_2012}.}  
        \item Abbreviations (all extracted from the aparc.a2009s statistics file):
        \item \textsuperscript{1} Inf.Lat.Vent: Inferior Lateral Ventricle.
        \item \textsuperscript{2} ThickAvg: Average cortical thickness of the specified cortical region.
        \item \textsuperscript{3} S.temporal.sup: Sperior temporal sulcus.       
     \end{tablenotes}
\endgroup
  \end{threeparttable}
\end{table}

\begin{figure}
    \centering
     \includegraphics[scale=0.7]{Figure/legend_only_0.20_0.50_0.10_0.70_0.40_S2_CS.pdf}\\
    \includegraphics[scale=0.33]{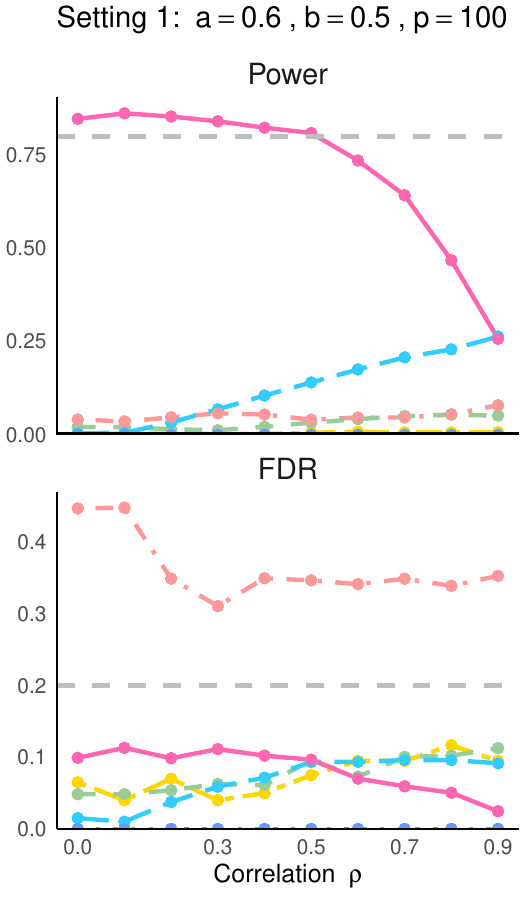}
    \includegraphics[scale=0.33]{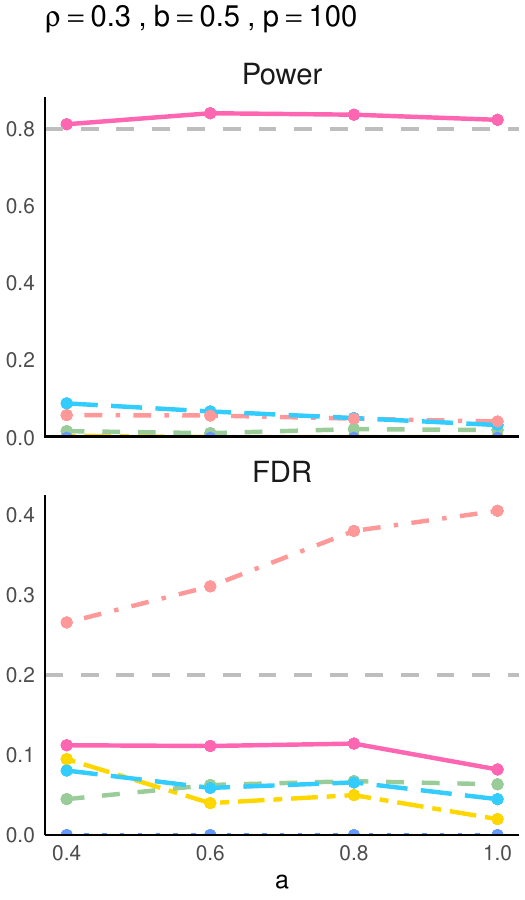}
      \includegraphics[scale=0.33]{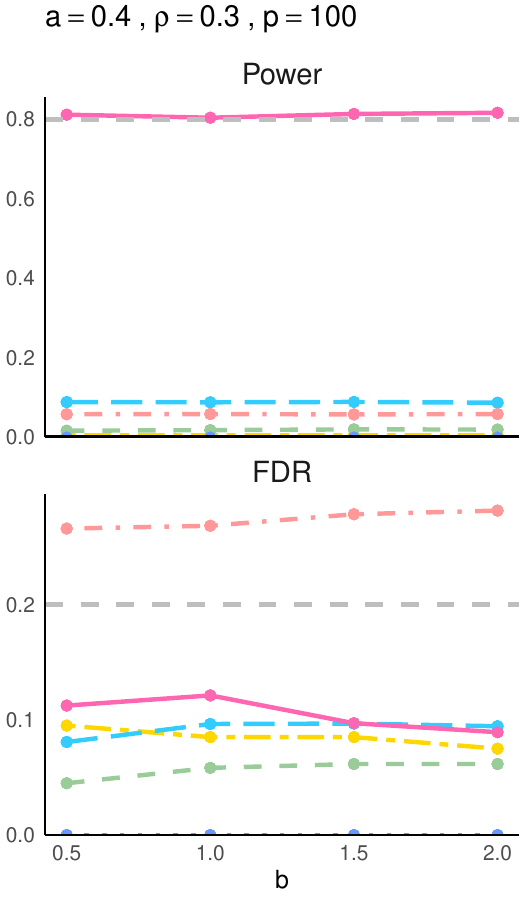}\\
    \includegraphics[scale=0.33]{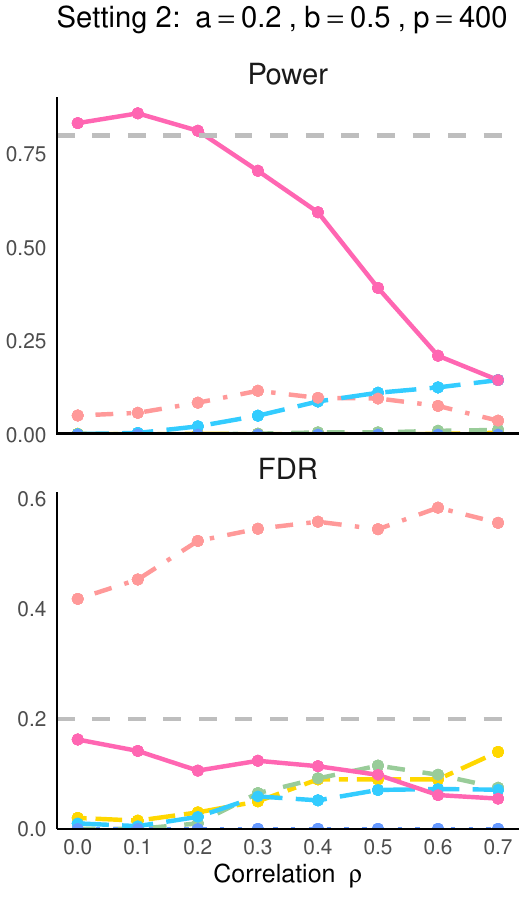}
    \includegraphics[scale=0.33]{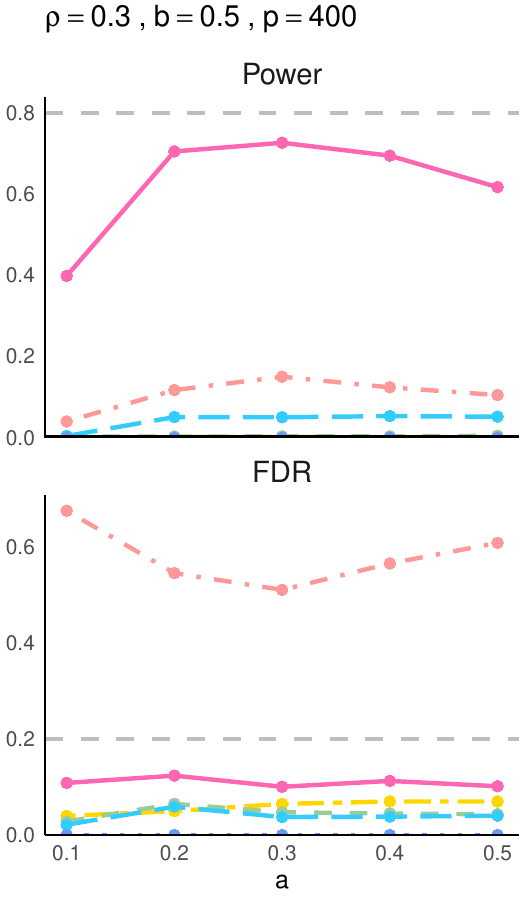}
      \includegraphics[scale=0.33]{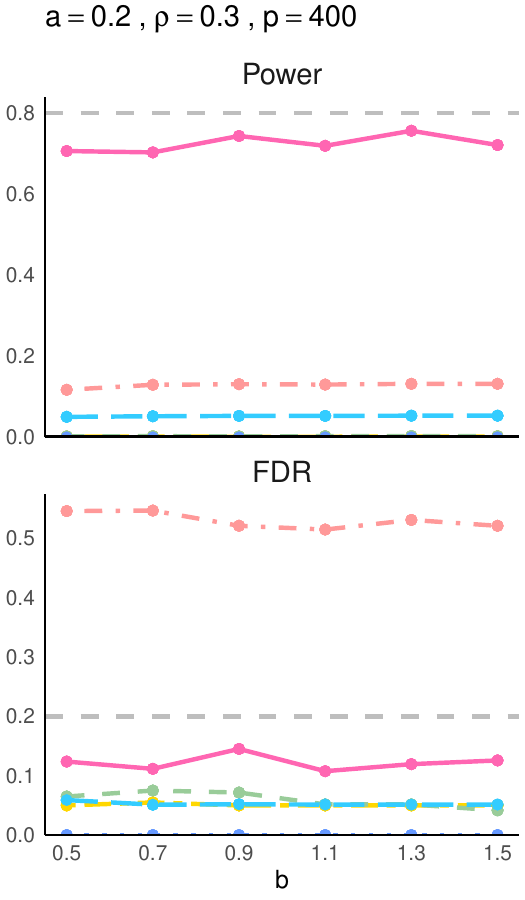}\\
   \caption{Simulation results for the empirical power and false discovery rate (FDR) with sample size $n=1000$ are shown for correlated mediators with a defined correlation structure across 100 replications. The results are stratified by the effect size of the correlation structure ($\rho$),
   the number of potential mediators (p), the effect size of path-a (a), and the effect size of path-b (b) in the non-linear models with Setting 1 in the upper, Setting in the lower. }
   \label{fig:chg_nonlinear}
\end{figure}

\end{document}